\newtheorem{theorem}{Theorem}
\newtheorem{lemma}[theorem]{Lemma}
\newtheorem{conjecture}{Conjecture}
\providecommand{\keywords}[1]{\textbf{\textit{Keywords:}}#1}
\newcommand\blfootnote[1]{%
  \begingroup
  \renewcommand\thefootnote{}\footnote{#1}%
  \addtocounter{footnote}{-1}%
  \endgroup
}
\title{Mixing Time for Square Tilings \footnote{This work was supported by the ANR project QuasiCool (ANR-12-JS02-011-01)}}
\author{Alexandra Ugolnikova\footnote{contact: alexandra.ugolnikova \textit{at} lipn.univ-paris13.fr}}
\affil{Laboratoire d'Informatique de Paris Nord}
\date{\today}
\begin{document}

\maketitle

\begin{abstract}
We consider tilings of $\mathbb{Z}^2$ by two types of squares. We are interested in the rate of convergence to the stationarity of a natural Markov chain defined for square tilings. The rate of convergence can be represented by the mixing time which measures the amount of time it takes the chain to be close to its stationary distribution. We prove polynomial mixing time for $n \times \log n $ regions in the case of tilings by $1 \times 1$ and $s \times s$ squares. We also consider a weighted Markov chain with weights $\lambda$ being put on big squares. We show rapid mixing of $O(n^4 \log n)$ with conditions on $\lambda$. We provide simulations that suggest different conjectures, one of which is the existence of frozen regions in random tilings by squares.
\end{abstract}
\blfootnote{\textup{2010} \textit{Mathematics Subject Classification}: 60J10, 52C20.}
\keywords{
Tilings, Square tilings, Height function, Tiling graph,
 Markov chain, Mixing time, Coupling, Arctic Circle.
}

\section{Introduction}

In the present work we consider tilings of a closed simply connected region of $\mathbb{Z}^2$ by two squares of different sizes $k$ and $s$, $k,s \in \mathbb{N}$, that we denote by $(k,s)-$tilings (Figure \ref{fig: squares}) shows an example of different square tilings of a $60\times 60$ square region). We are interested in studying their structure and answering questions about random generation.

In Section \ref{sec: preliminaries} we lay out the necessary notions and results about Markov chains with which we operate throughout other sections. In Section \ref{sec : settings} we define a height function for the square tilings and local transformations (flips) following definitions and results by \cite{CL90, KK92, K94,BR02}, followed by a series of examples. At the end of Section \ref{sec : settings} we introduce a Markov chain MC$_{square}$ defined for our tiling system. Ideally, we want to be able to rigorously bound the time that it takes for the chain to reach stationarity (its mixing time). 

In Section \ref{subsec: fast mixing} we prove rapid mixing for MC$_{square}$ (polynomial over the size of the tileable region) for $(1,s)$ tilings of $n \times \log n $ regions (Theorem \ref{thm: mixing time for nxlong }). We conjecture that it is fast mixing for any $m$ and $s$ when considered for these long regions. 

In Section \ref{subsec: weighted} we consider a weighted version of the chain, where weights $\lambda$ are assigned to one type of squares in a tiling. Adding weights helps to notice phase transitions in the system. Usually the weights $\lambda$ are assigned in such a way that the case $\lambda = 1$ corresponds to the unweighted version of the chain. This case is generally hard to analyze but it becomes easier to analyze the dynamics for $\lambda$ below and above some critical point. We draw polynomial bounds on the mixing time of the Markov chain for $(1,2)-$tilings with a condition on $\lambda$ in Theorem \ref{th: weighted mixing} and for $(1,s)-$tilings with a condition on $\lambda$ in Theorem \ref{thm: $(1,s)$ weighted coupling}.

In Section \ref{subsec: simulations} we present simulations and conjectures that show that the mixing time of the unweighted version might not be polynomial but sub-exponential, and might be as difficult to analyze as critical cases of Markov chains for independent sets and perfect matchings. A relation to the independent sets is discussed in Section \ref{subsec: weighted}. 

At the end, in Section \ref{subsec: limit shape}, we conjecture an analog of the Aztec diamond for dominoes (hexagon for lozenges) and present simulations that show an ``Arctic circle''--type phenomenon.

\section{Preliminaries: Mixing and coupling times}
\label{sec: preliminaries}


We consider reversible ergodic Markov chains with a finite state space $\Omega$. We denote its stationary distribution by $\pi$, its probability law by $P$. For any initial state $x \in \Omega$ let the \textbf{total variation distance} between $P(x, \cdot)$ and $\pi$ is 

$$
d_{TV}(P(x, \cdot), \pi) := \frac{1}{2} \sum_{y \in \Omega} \lvert P^t(x,y) - \pi(y) \rvert.
$$

Let us write it as $d_x(t)$. The \textbf{mixing time} of a MC is the time it takes the chain to get close to its stationary distribution. Formally, it is defined as follows:

$$
\tau_{mix}(\varepsilon) := \max_{x \in \Omega} \min \lbrace t: d_x(t') \leq \varepsilon \;\forall\; t' \geq t\rbrace , 
$$

$$
\tau_{mix}:= \tau_{mix} \left( \frac{1}{4} \right).
$$

A classical way to bound the rate of convergence of a chain is to bound its mixing time. There are lot of different ways of bounding the mixing time: via the second largest eigenvalue (which can be analyzed using the corresponding tiling graph's properties, although it often turns out to be difficult due to the unknown graph's structure), coupling methods (see, e.g., \cite{BD97, DS91, DG98, LPW09, R06,S92}). 

Here we concentrate on the \textbf{coupling} method.
A coupling for two probability distributions $\mu$ and $\nu$ is a pair of random variables $(X,Y)$ defined on the  same probability space such that $\mathbb{P}(X=x) = \mu(x)$ and $\mathbb{P}(Y=y) = \nu(y)$. Here we  will be  using couplings for Markov chains  where constructing copies of the chain proves to be a useful tool to analyze the distance to stationarity. A \textbf{coupling of a MC} is a stochastic process $(X_t, Y_t)_t$ on $\Omega \times \Omega$ such that:
\begin{enumerate}
\item $X_t$ and $Y_t$ are copies of the MC with initial states $X_0 = x$ and $Y_0 = y$;
\item If $X_t = Y_t$, then $X_{t+1}  = Y_{t+1}$.
\end{enumerate}
Let $T^{x,y} = \min \lbrace t: X_t = Y_t \vert X_0 = x, Y_0 = y\rbrace$. Then define the \textbf{coupling time} of the MC to be
$$
\tau_{cp} :=  \max_{x,y} \mathbb{E} T^{x,y}.
$$

The following result \cite{A81} relates the coupling and mixing times:

\begin{theorem}[Aldous]\label{theorem: A}
$$
\tau_{mix}(\varepsilon) \leq \lceil \tau_{cp} e\ln \varepsilon^{-1}  \rceil.
$$
\end{theorem}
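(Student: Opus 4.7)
The plan is to bound the mixing time in two stages: first derive a fixed contraction factor $1/e$ using the coupling inequality and Markov's inequality, then amplify this to $\varepsilon$-closeness via submultiplicativity of the worst-pair total variation distance.

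First I would introduce the auxiliary quantity $\bar d(t) := \max_{x,y \in \Omega} d_{TV}(P^t(x,\cdot), P^t(y,\cdot))$. Since $\pi$ is stationary, writing $\pi(\cdot) = \sum_y \pi(y) P^t(y,\cdot)$ and using convexity of $d_{TV}$ gives the standard comparison $d_x(t) \leq \bar d(t)$ for every $x \in \Omega$. The coupling inequality applied to any coupling $(X_t,Y_t)$ with $X_0=x$, $Y_0=y$ yields $d_{TV}(P^t(x,\cdot),P^t(y,\cdot)) \leq \mathbb{P}(X_t \neq Y_t) \leq \mathbb{P}(T^{x,y} > t)$, and maximizing over $x,y$ together with Markov's inequality gives $\bar d(t) \leq \tau_{cp}/t$. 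Choosing $t_0 := \lceil e\,\tau_{cp}\rceil$ then produces the one-step contraction $\bar d(t_0) \leq 1/e$, which explains the factor $e$ in the final bound.

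The amplification step rests on the submultiplicativity $\bar d(s+t) \leq \bar d(s)\,\bar d(t)$. I would prove this in the usual way: decompose $P^s(x,\cdot) - P^s(y,\cdot)$ into its positive and negative parts, both of mass $d_{TV}(P^s(x,\cdot), P^s(y,\cdot))$; push each part forward by $P^t$ (using the Chapman--Kolmogorov identity); and apply the triangle inequality together with the definition of $\bar d(t)$ to bound the resulting TV distance. Iterating with $s = t_0$ gives $\bar d(k\,t_0) \leq (1/e)^k = e^{-k}$ for every integer $k \geq 1$.

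Finally, to force $d_x(k\,t_0) \leq \bar d(k\,t_0) \leq \varepsilon$ it suffices to take $k \geq \ln \varepsilon^{-1}$, so $\tau_{mix}(\varepsilon) \leq k\,t_0 \leq \lceil \tau_{cp}\, e \ln \varepsilon^{-1}\rceil$, where the outer ceiling absorbs the rounding in $t_0$ and in $k$. The only genuinely technical step is the submultiplicativity lemma for $\bar d$; everything else is a direct application of Markov's inequality to $T^{x,y}$, so that is where I expect to spend most of the care.
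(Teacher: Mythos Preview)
The paper does not actually prove this statement: Theorem~\ref{theorem: A} is quoted from Aldous \cite{A81} as a background result in the preliminaries, with no argument given. So there is no ``paper's proof'' to compare against.

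Your proposal is the standard textbook derivation (coupling inequality $+$ Markov $+$ submultiplicativity of $\bar d$), and the logic is sound. The only point that deserves a second look is the closing sentence: from $t_0=\lceil e\,\tau_{cp}\rceil$ and $k=\lceil\ln\varepsilon^{-1}\rceil$ you obtain $\tau_{mix}(\varepsilon)\le \lceil e\,\tau_{cp}\rceil\cdot\lceil\ln\varepsilon^{-1}\rceil$, and a single outer ceiling $\lceil e\,\tau_{cp}\ln\varepsilon^{-1}\rceil$ does \emph{not} in general dominate a product of two ceilings. This is a cosmetic discrepancy in how the constant is packaged, not a flaw in the method; either state the bound with the product of ceilings, or note that the exact form in the theorem is the one recorded in \cite{A81}.
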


One of the most used methods to bound the mixing time is the following path coupling theorem \cite{DG98}. The authors show that in order to bound the coupling time, one only has to consider pairs of configurations of the coupled chain that are close to each other in the defined metric. It is sufficient to prove that they [each pair of configurations] have more tendency to remain close to each other under the evaluation of the chain. Then the mixing time is polynomial and depends on the diameter of the corresponding graph. 

\begin{theorem}[Dyer-Greenhill]\label{theorem: DG}
Let $\varphi : \Omega \times\Omega \rightarrow \lbrace 0, \ldots, D \rbrace$  be an integer-valued metric,  $U$ -- a subset of $\Omega \times \Omega$ such that for all $(X_t, Y_t) \in  \Omega \times \Omega $ there exists a path between them: $X_t  = Z_0, Z_1, \ldots, Z_n = Y_t $  with $(Z_i, Z_{i+1}) \in U$ for $ 0\leq i\leq r-1 $ and $$ \sum_{i=0}^{r-1} \varphi(Z_i, Z_{i+1}) = \varphi(X_t, Y_t).$$
Let MC be a Markov chain on $\Omega$ with transition matrix $P$. Consider a random function $f: \Omega \rightarrow \Omega$ such  that $\mathbb{P}(f(X)= Y) = P(X,Y) $ for all $X,Y \in \Omega$, and let a coupling be  defined by $(X_t,Y_t) \rightarrow (X_{t+1}, Y_{t+1}) = (f(X_t), f(Y_t))$.

\begin{enumerate}
\item If there exists $\beta < 1$  such that $ \mathbb{E}[\varphi(X_{t+1}, Y_{t+1})] \leq \beta \varphi(X_t,Y_t)$ for all $(X_t, Y_t) \in U$, then the mixing time satisfies $$ \tau_{mix}(\varepsilon) \leq \frac{\ln(D\varepsilon^{-1})}{1-\beta}.$$ 
\item If $\beta = 1$ and there exists $\alpha > 0$ : $\mathbb{P}(\varphi(X_{t+1},Y_{t+1}) \neq  \varphi(X_t,  Y_t)) \geq \alpha $ for all $t$ such that $X_t \neq Y_t$. Then the mixing time  satisfies  
$$ \tau_{mix}(\varepsilon) \leq  \left\lceil\frac{eD^2}{\alpha} \right\rceil \lceil \ln \varepsilon^{-1} \rceil .$$ 
\end{enumerate}
\end{theorem}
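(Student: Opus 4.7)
The plan is the standard path-coupling argument. The hypotheses only give control on pairs in $U$, but the path assumption lets us lift any single-step coupling defined on $U$-pairs to a coupling on all of $\Omega \times \Omega$: given $(X_t,Y_t)$, fix a path $X_t = Z_0, Z_1, \ldots, Z_n = Y_t$ with $(Z_i,Z_{i+1}) \in U$ and $\sum_i \varphi(Z_i,Z_{i+1}) = \varphi(X_t,Y_t)$, and then apply the \emph{same} random function $f$ simultaneously to every $Z_i$. The pair $(f(Z_0),f(Z_n))$ is a legitimate one-step coupling for $(X_t,Y_t)$ because by construction $f(Z_0) \sim P(X_t,\cdot)$ and $f(Z_n) \sim P(Y_t,\cdot)$. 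By the triangle inequality for $\varphi$ and linearity of expectation,
\[
\mathbb{E}[\varphi(X_{t+1},Y_{t+1})] \;\leq\; \sum_{i=0}^{n-1}\mathbb{E}[\varphi(f(Z_i),f(Z_{i+1}))].
\]
This is the workhorse identity that transfers per-$U$-edge bounds to global bounds.

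For part 1, combining the above with the hypothesis gives $\mathbb{E}[\varphi(X_{t+1},Y_{t+1})] \leq \beta\,\varphi(X_t,Y_t)$ for every pair. Iterating and using $\varphi(X_0,Y_0) \leq D$ yields $\mathbb{E}[\varphi(X_t,Y_t)] \leq D\beta^t$. Since $\varphi$ is integer-valued, Markov's inequality gives $\mathbb{P}(X_t \neq Y_t) = \mathbb{P}(\varphi(X_t,Y_t) \geq 1) \leq D\beta^t$, and the coupling inequality bounds $d_{TV}$ by the same quantity. Imposing $D\beta^t \leq \varepsilon$ and using $\ln(1/\beta) \geq 1-\beta$ produces the claimed bound $\ln(D\varepsilon^{-1})/(1-\beta)$.

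For part 2, the same lifting gives $\mathbb{E}[\varphi(X_{t+1},Y_{t+1}) \mid (X_t,Y_t)] \leq \varphi(X_t,Y_t)$, so $\varphi(X_t,Y_t)$ is a non-negative integer-valued supermartingale bounded by $D$. The extra hypothesis says that, as long as $X_t \neq Y_t$, the value moves by an integer amount with probability at least $\alpha$, which forces the conditional variance to be at least $\alpha$. A standard quadratic-Lyapunov estimate — for instance tracking $\mathbb{E}[\varphi(D-\varphi)]$, or applying the classical hitting-time lemma for bounded martingales with a uniform variance lower bound — then bounds the expected coalescence time by $\tau_{cp} \leq D^2/\alpha$. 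Substituting into Theorem~\ref{theorem: A} yields the claimed mixing-time bound.

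The main obstacle is making the lifting airtight: one must verify that feeding the \emph{same} realization of $f$ into every intermediate $Z_i$ truly produces a valid one-step coupling at the endpoints (marginals correct), and that the conditional variance lower bound in part~2 holds uniformly in the current state. Once those are settled, the rest reduces to iterating an expectation inequality, applying Markov, and, in the tight case $\beta=1$, a routine bounded-martingale hitting-time estimate.
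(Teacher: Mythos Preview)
The paper does not prove this theorem; it is quoted in the Preliminaries section as a known result of Dyer and Greenhill \cite{DG98} and used later as a black box. Your sketch is correct and is essentially the standard argument from the original source: lift the per-$U$-edge contraction to arbitrary pairs via the path decomposition and the triangle inequality, iterate and apply Markov's inequality for part~1, and for part~2 treat $\varphi(X_t,Y_t)$ as a bounded non-negative integer supermartingale with conditional variance at least $\alpha$, bound the expected coalescence time by $D^2/\alpha$, and plug this into Aldous' coupling bound (Theorem~\ref{theorem: A}).
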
 

\section{Settings}
\label{sec : settings}

\subsection{Height function}
 \label{subsec: height f}

Height functions for tilings by rectangular tiles appear in numerous works (e.g., \cite{CL90, KK92, K94, BR02}) that use Conway tiling groups \cite{T90}. Having a well defined height function permits to get a structure on the set of configurations and then use it to construct an algorithm that verifies the tileability of a given region.

Let $a$ be a symbol corresponding to a horizontal step of length one to the right on the rectangular grid, $a^{-1} - $ to a step to the  left, $b -$ to a vertical step up and $b^{-1} -$ to a step down. Then any path on the skeleton of any tiling of $R$ is a word in the alphabet $\lbrace a, a^{-1},b, b^{-1} \rbrace$, as well as the perimeter of $R$.

Consider a group $G$ generated by $a$ and $b$ and restrictions that ensure that a path around each tile is elementary. A path around an $m \times m $ square is  $ a^mb^ma^{-m}b^{-m} =:[a^m, b^m]$. Similar for the other tile. Then the tiling group $G$ is defined as follows:  $G = \left\langle a, b \mid [a^m, b^m], [a^s, b^s] \right\rangle $, where  $[a^m, b^m] =[a^s, b^s] = e$. Any path $g$ on a skeleton of a tiling of $R$ can be expressed using elements of $G$. For example, consider a tiling  of a $2m \times 2m$ rectangle by  $m \times m $ squares. Let $l$ be the left lower corner of the region, $r$  be its rights upper corner. Then
$ g(l,r) = a^mb^ma^mb^m$ is a path (one of many) from $l$ to $r$. Another path is, e.g., $g'(l,r) = a^{m}a^{m}b^{m}b^{m} = a^{2m}b^{2m}$.

\subsubsection{Quotient group H}

\begin{figure}
\centering
\includegraphics[scale=0.8]{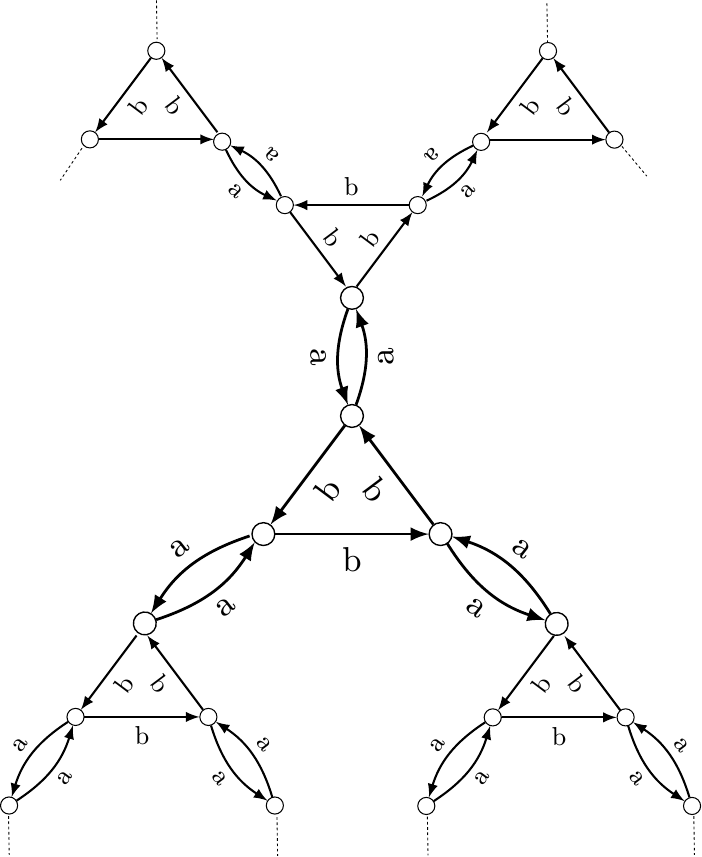}
\caption{Cayley graph of $H \simeq \mathbb{Z}_2 \times \mathbb{Z}_3$.}
\label{fig: Cayley 2x3}
\end{figure}


In order to define a height function, we introduce a quotient group $H$ of $G$. Let $H = G/ \left\langle a^{m}, b^{s} \right\rangle$. Since  $a^{m}= b^{s}= e$, then $a^{m}b^{m} =b^{m}a^{m}$ and $a^{s}b^{s} =b^{s}a^{s}$. So $H = \left\langle a, b \mid a^m, b^s \right\rangle \simeq \mathbb{Z}_m \times \mathbb{Z}_s$. Let $\Gamma_H$ be the Cayley graph of $H$. It is a tree made of cycles of size $m$ with cycles of size $s$ attached to every vertex.
Example for $m=2$, $s=3$ is show in Figure \ref{fig: Cayley 2x3}.

Every path $g$ has a unique (canonical) expression using elements of $\mathbb{Z}_m \ast \mathbb{Z}_s$:
$$g = k_1l_1k_2l_2 \ldots k_rl_r, $$
where $k_i \in \mathbb{Z}_m, l_i \in \mathbb{Z}_s$ for all $1 \leq i \leq r$ and some non-negative $r$. It can easily be constructed following the edges of the Cayley graph.

\subsubsection{Weighted Cayley graph}

\begin{figure}
\centering
\includegraphics[scale=0.8]{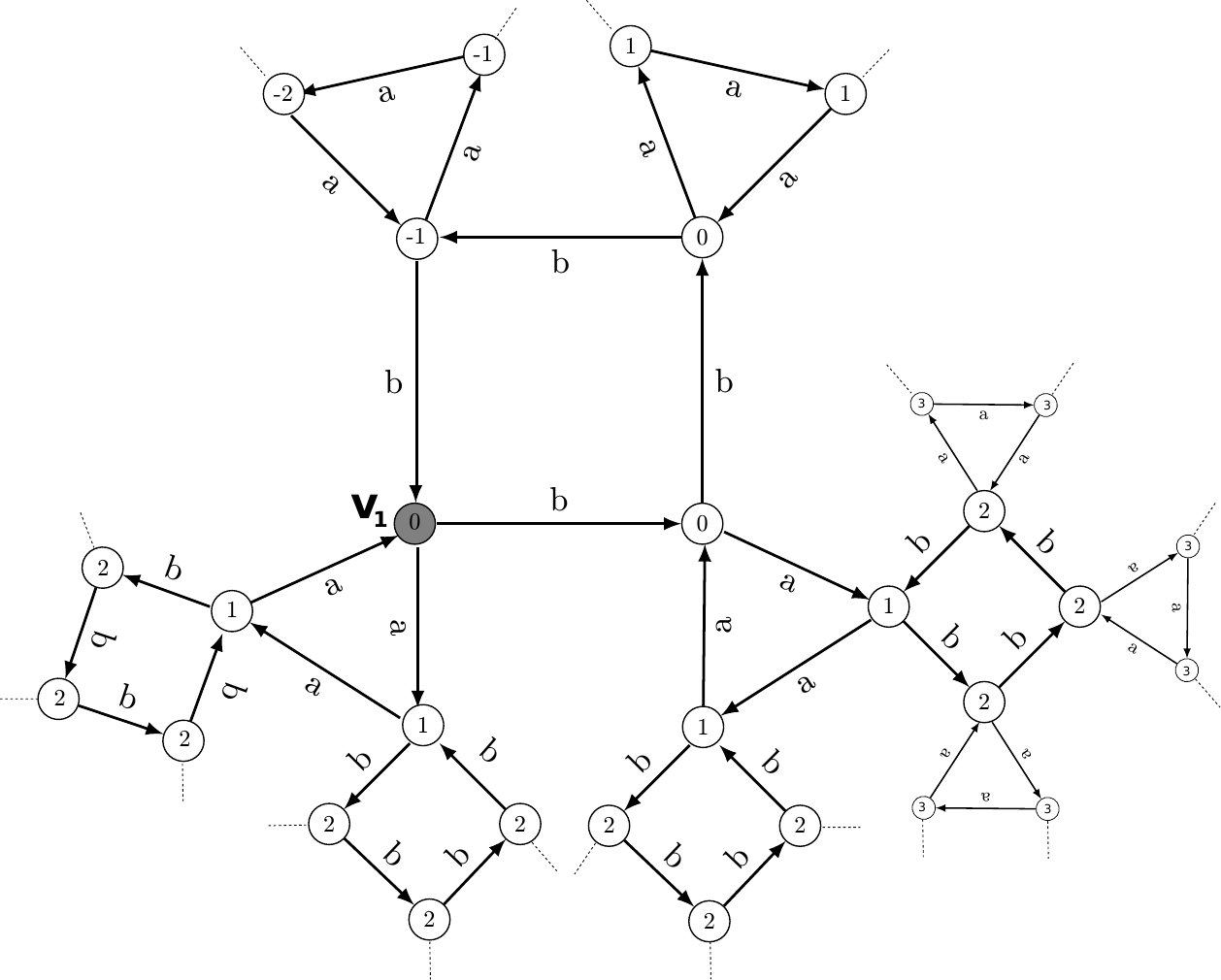}
\caption{Weighted Cayley graph for $H \simeq \mathbb{Z}_3 \times \mathbb{Z}_4$.}
\label{fig: Cayley weighted 3x4}
\end{figure}

Consider the Cayley graph $\Gamma_H$. Choose a vertex $v_1$, let it be the starting point and set its \textbf{weight} $w(v_1) := 0$. $v_1$ belongs to two cycles $C^m_{v_1}$ of size $m$ and $C^s_{v_1}$ of size $s$. $C^s_{v_1} = \lbrace v_1, v_2, \ldots v_s \rbrace$. Set
\begin{itemize}
\item $w(v_i) = w(v_1)$ for $i = 2, \ldots, s-1$,
\item $w(v_s) = w(v_1)-1$.
\end{itemize}
 Let $C^m_{v_1}, C^m_{v_2}, \ldots, C^m_{v_s} $ be cycles of size $m$ to which vertices $v_1, \ldots v_s$ belong accordingly. For all $i = 1, \ldots, s-1$, set
 \begin{itemize}
 \item $w(y) = w(v_i)+1$ for all $y \in C^m_{v_i}$, $y \neq v_i.$
 \end{itemize} 
 For $i=s$, set
 \begin{itemize}
 \item  $w(z_j) = w(v_s)$ for  $z_j \in C^m_{v_s} \setminus \lbrace v_s \rbrace $ where $j = 1, \ldots m-2$,
 \item $w(z_{m-1}) = w(v_s)-1$. 
  \end{itemize}

Continue assigning weights to the vertices of the Cayley graph in the way described above. The graph is infinite, a small part of the Cayley graph for $m=3, s=4$ is shown in Figure \ref{fig: Cayley weighted 3x4}.

Another way to define the above is the following:
\begin{itemize}
\item In each cycle all vertices are of weight $w$, except for one that has weight $w-1$, let us call it \textbf{descending}.
\item Two cycles are connected via a vertex that is descending in one cycle and is not descending in the other.
\end{itemize}


\subsubsection{Height function for $(m,s)-$square tilings}

\begin{figure}
\centering
\includegraphics[scale=1]{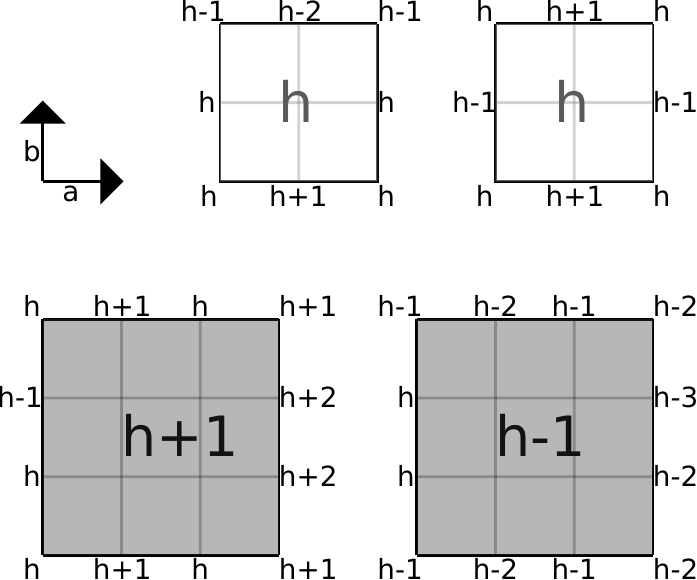}
\caption{Heights for $t_2$ and $t_3$ tiles in $(2,3)-$tilings.}
\label{fig: square_2_3_tile_height}
\end{figure}

For each $(m,s)-$ tiling $T$ of a s.c. finite region $R$ of $\mathbb{Z}^2$ denote the set of vertices of its skeleton by $V_T$. The height $h$ is defined on $ V_T$ as follows. Choose an initial point $x$ on the boundary of $R$ (if $R$ is a rectangle $x$ can be chosen as the left lower corner of $R$ for example). Without loss of generality, set $h(x) := w(v_1) = 0$.  To calculate the height of any given $y$, construct a path $g(x,y)$ connecting $x$ and $y$ in the following way: start the path in $x$ and make moves that do not cross any tile. Every time a horizontal or vertical move is made, do the corresponding move in the weighted Cayley graph. The weight of the final node gives $h(y)$ and does not depend on the choice of the path. Let us point out that the height function on the boundary is completely defined by the region itself and does not depend on the tiling.

For $m,s\geq 2$ the height function $h$ defined on vertices of $(m,s)-$ square tilings of a rectangular region $R$ is flat on the boundary (it goes around the same two cycles on the Cayley graph) and $O($size of $R)$ on the interior points. For the degenerate case when $m$ or $s$ equals one, $h$ is simply a constant.

There are two tiles: $t_s $ is the square tile of size $s$, $t_m -$ of size $m$. Consider $t_m$. Since elements of $\mathbb{Z}_m$ encode the horizontal steps, the heights of vertices on one of the vertical sides are exactly the  same as the heights on the other vertical side. Let $h$ be the maximum over heights of vertices on the vertical side, then the maximum over heights of vertices on each of the horizontal sides belongs to $\lbrace h-1, h+1 \rbrace$. Define the height of the tile $t_m$ $h(t_m):=h$. Change vertical sides  to horizontal sides to get the similar  property for $t_s$. Figure \ref{fig: square_2_3_tile_height} shows how the heights are defined each node for $t_2$ and $t_3$ using the weighted Cayley graph where weights are used to write heights in each vertex.

\begin{figure}
\centering
\includegraphics[scale=1]{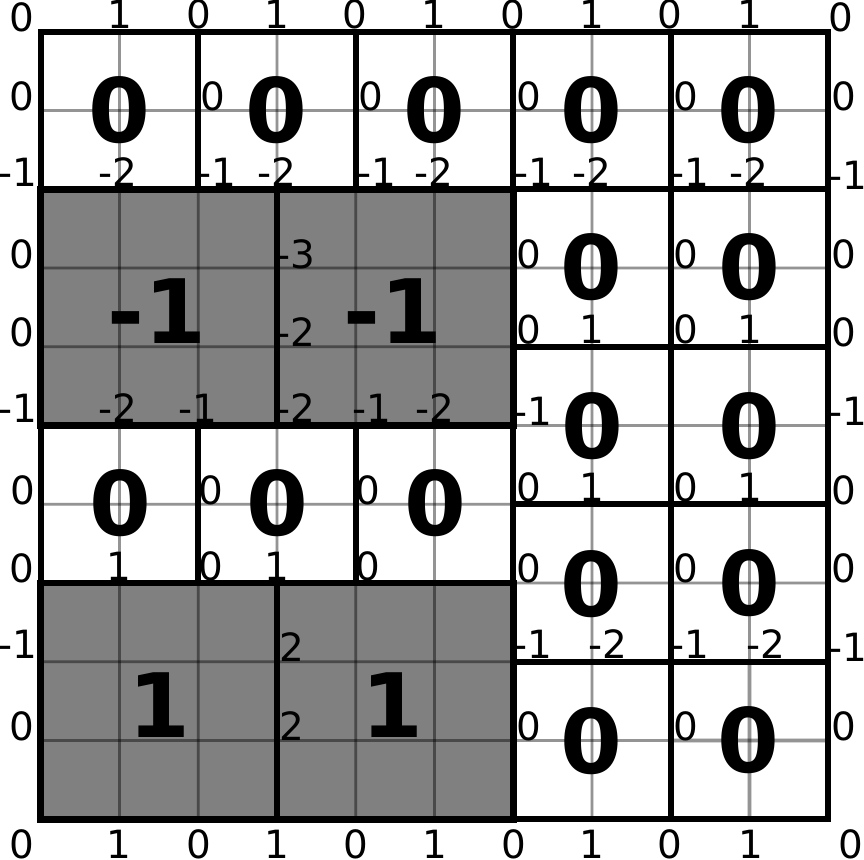}
\caption{A $(2,3)-$tiling of a $10 \times 10$ square with heights.}
\label{fig: 23_tiling_10}
\end{figure}

Let $h(T)$ be the height of a tiling $T$. Define it as follows:
$$
h(T) = \sum_{t \in T} a(t)h(t),
$$
where $a(t)$ is the area of a tile $t$.  An example of a $(2,3)-$tiling of a $10 \times 10$ square with heights is shown in Figure\ref{fig: 23_tiling_10}. Its total height is $ 1 \times 9 + 1 \times 9 + (-1)\times 9 + (-1)\times 9  = 0$.

For the degenerate case $h(T)$ is not very interesting because $h(T) \equiv const$ area$(R)$ for every tiling $T$.

\subsection{Flips}
\label{subsec: flips}

\begin{figure}
\centering
\includegraphics[scale=0.5]{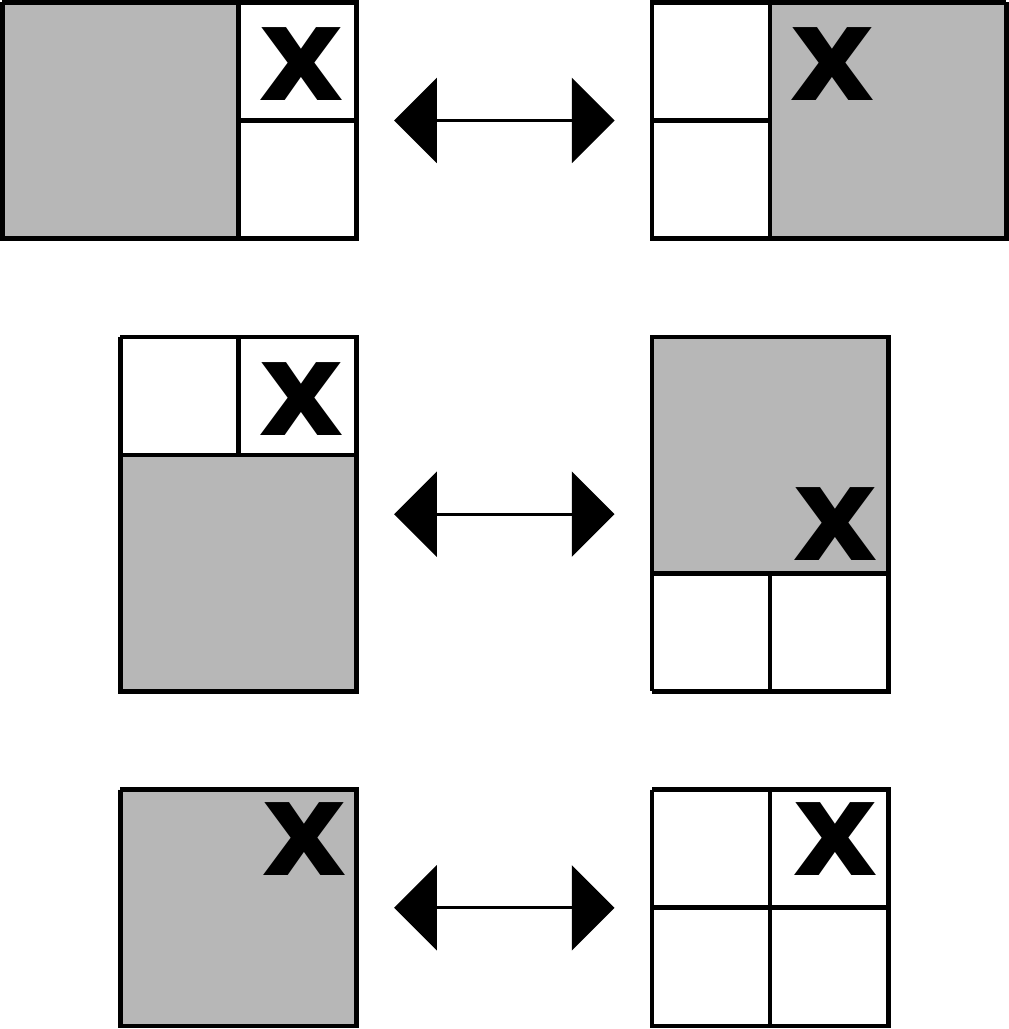}
\caption{Flips for the $(1,2)-$ tiling.}
\label{fig: flip12}
\end{figure}

Let us define local transformations, or \textbf{flips}, for $(m,s)-$square tilings. Call a \textbf{horizontal block} a rectangle of size $sl \times (m+s)$ where $l$ is the least positive integer such that $m$ divides $sl$. There are exactly two ways to tile this block. A \textbf{horizontal flip} is a flip in a horizontal block is a change from one tiling to another. In the same way, define a \textbf{vertical block} as a rectangle of size $(m+s) \times sl$ and let a \textbf{vertical flip} be a flip in a vertical block.
Call a \textbf{central block} a square of size $p \times p$, where $l$ is the least positive integer that divides $m$ and $s$. For $(m,s)-$tilings, where $m,s>1$, there exist exactly two kinds of tilings of the block: by $(p/m)^2$ squares of size $m$ and by $(p/s)^2$ squares of size $s$. A \textbf{central flip} is a flip in a central block. If $(m,s) = 1$, then $p = ms$. For $(1,s)-$tilings there are three tilings of the horizontal/vertical blocks. See Figures \ref{fig: flip12}, \ref{fig: flip23} for an example of flips for the $(1,2)$ and $(2,3)$ tilings.

\begin{figure}
\centering
\includegraphics[scale=0.5]{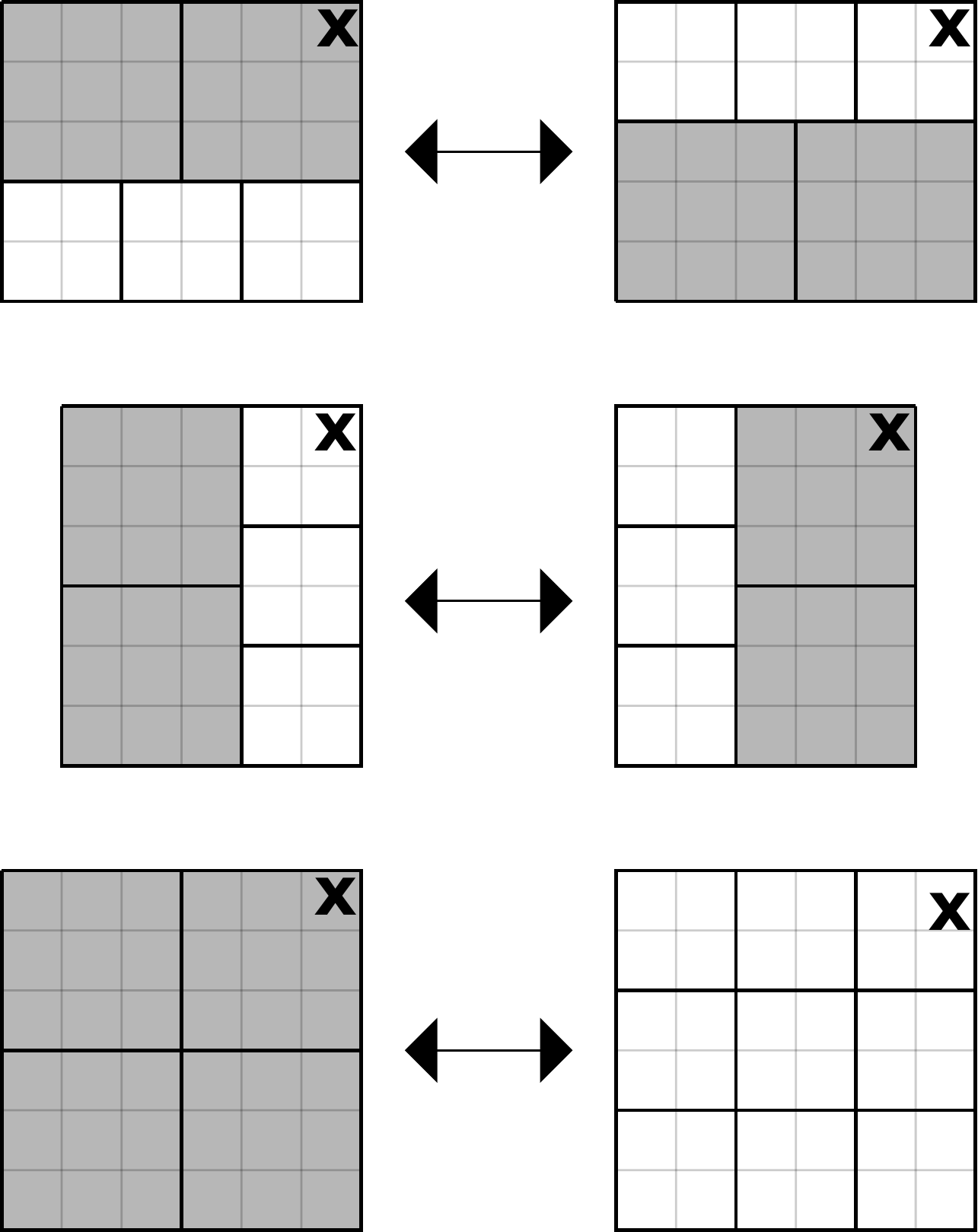}
\caption{Flips for the $(2,3)-$ tiling.}
\label{fig: flip23}
\end{figure}

It was shown in \cite{BR02} that the tiling space is connected by flips. Moreover, there exists a minimal tiling to which one gets by performing height non-increasing flips for every simply connected region. There is also an algorithm that runs in quadratic time over the size of the region that checks tileability by trying to construct a minimal tiling $T_{min}$. This minimal tiling might not be unique. There might be a subset of height equivalent minimal tilings $M = {T_{min}}$. In this case, one can consider a function on the subset of height equivalent tilings (called \textbf{potential} in \cite{BR02}). For $T_{min} \in M$ the potential adds up vertical coordinates of horizontal sides of the first type of tiles and horizontal coordinates of vertical sides of the second type of tiles. $T_{min}$ that has the minimal potential is then the unique ``global'' minimal tiling on the set of tilings $\Omega_R$.

Since there are exactly two tilings for every type of block, we say that each flip has two directions. One way to think about is the following: if a flip changes the height of the tiling, then the direction of the flip is ``up'' if it increases the height and ``down'' if it decreases the height. If the flip does not change the height, then it changes the potential. Let us then say that the direction of the flip is ``up'' if it increases the potential and ``down'' if it decreases the potential.

\subsection{Examples}

\subsubsection*{ $(1,s)-$square tilings}

\begin{figure}
\begin{center}
\includegraphics[scale=0.5]{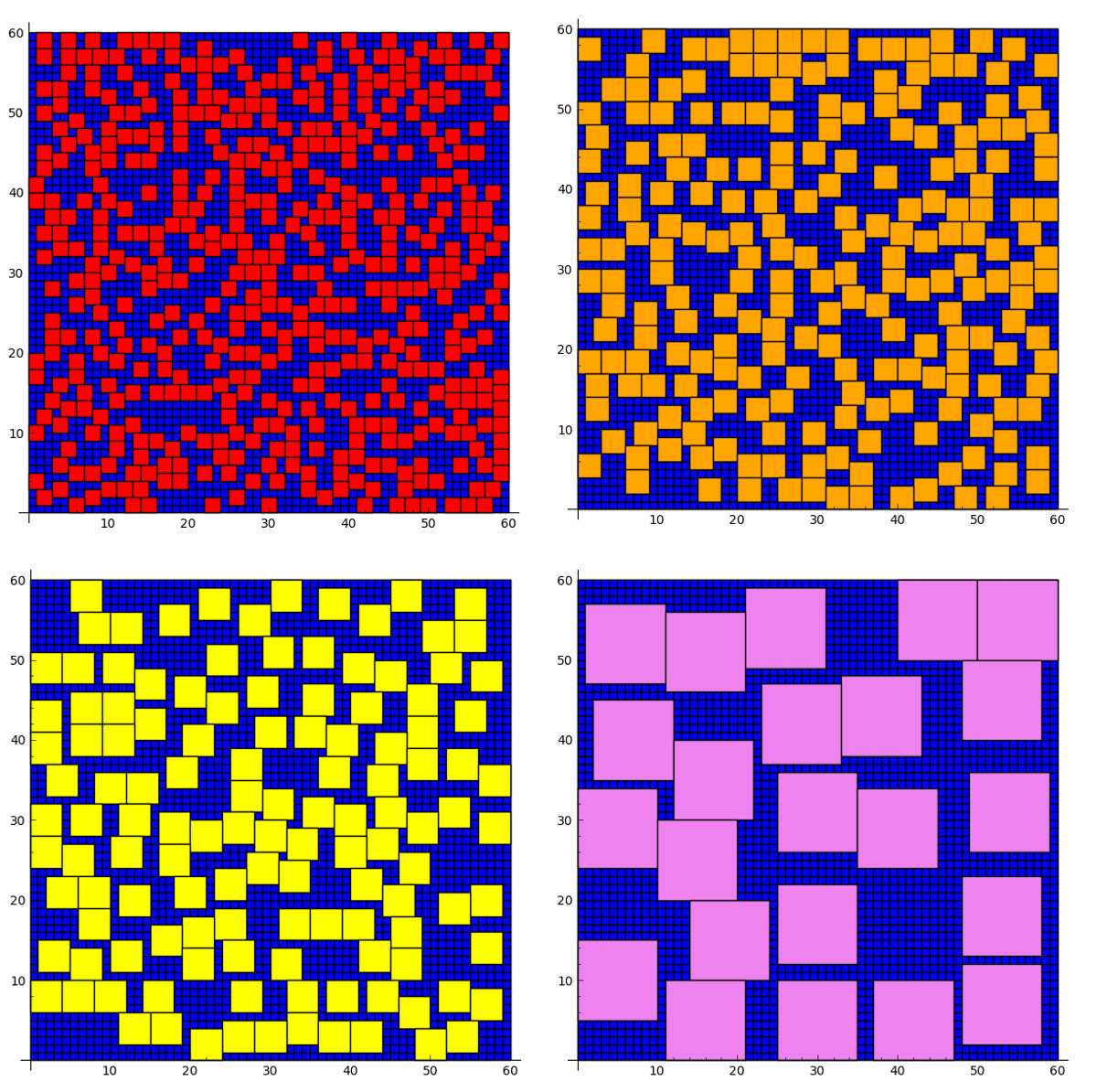}
\end{center}
\caption{ $(1,2), (1,3), (1,4), (1,10)$ tilings of a $60 \times 60$ square. }
\label{fig: squares}
\end{figure}

Consider tilings by $1\times 1$ and $s \times s $ squares of a finite region $R$ of $\mathbb{Z}^2$ of area $N_R$.
The height function defined above is not much of a use in this case, it is simply a constant.

Consider central flips on the set of tilings of a region $R$. One can define a different order on the set of tilings. For example, for a given tiling $T$ let its height $h(T) $ be equal to the minimal number of flips needed to get to from the tiling with only small squares $T_0$ to $T$, set $h(T_0) = 0$. The height function becomes simply a Hamming distance function on the tiling graph $G_{1,s}(R)$. In this setting, $T_0$ is the unique minimal tiling. If the region $R$ can be tiled by big squares only, then the maximal tiling is the tiling by big squares only (speaking about rectangular regions, both of its sides need to have a multiple of $s$ as a length). 

If we consider central flips on the set of tilings of a region $R$, then the diameter of the tiling graph is $d(G_{1,s}) = O(N_R)$. Let $\alpha_{s}$ be the number of squares of size $1$ in a tiling of $R$, then $\alpha_{s} = const (N_R)\mod s^2$. It does not depend on a tiling and holds for any finite region $R$ of $\mathbb{Z}^2$.

Figure \ref{fig: squares} shows examples of $(1,2), (1,3), (1,4), (1,10)$ tilings of a $60 \times 60 $ region. The tiling by $1 \times 1$ and $2 \times 2$ squares (in the upper left corner of Figure \ref{fig: squares}) is a random tiling obtained via coupling (see Subsection \ref{subsec: simulations} for details on simulations). The other three tilings are outputs after $100,000,000$ flips (starting from configurations with only $1 \times 1$ squares).

\subsubsection*{ $(2,3)-$square tilings}

\begin{figure}
\begin{center}
\includegraphics[scale=0.4]{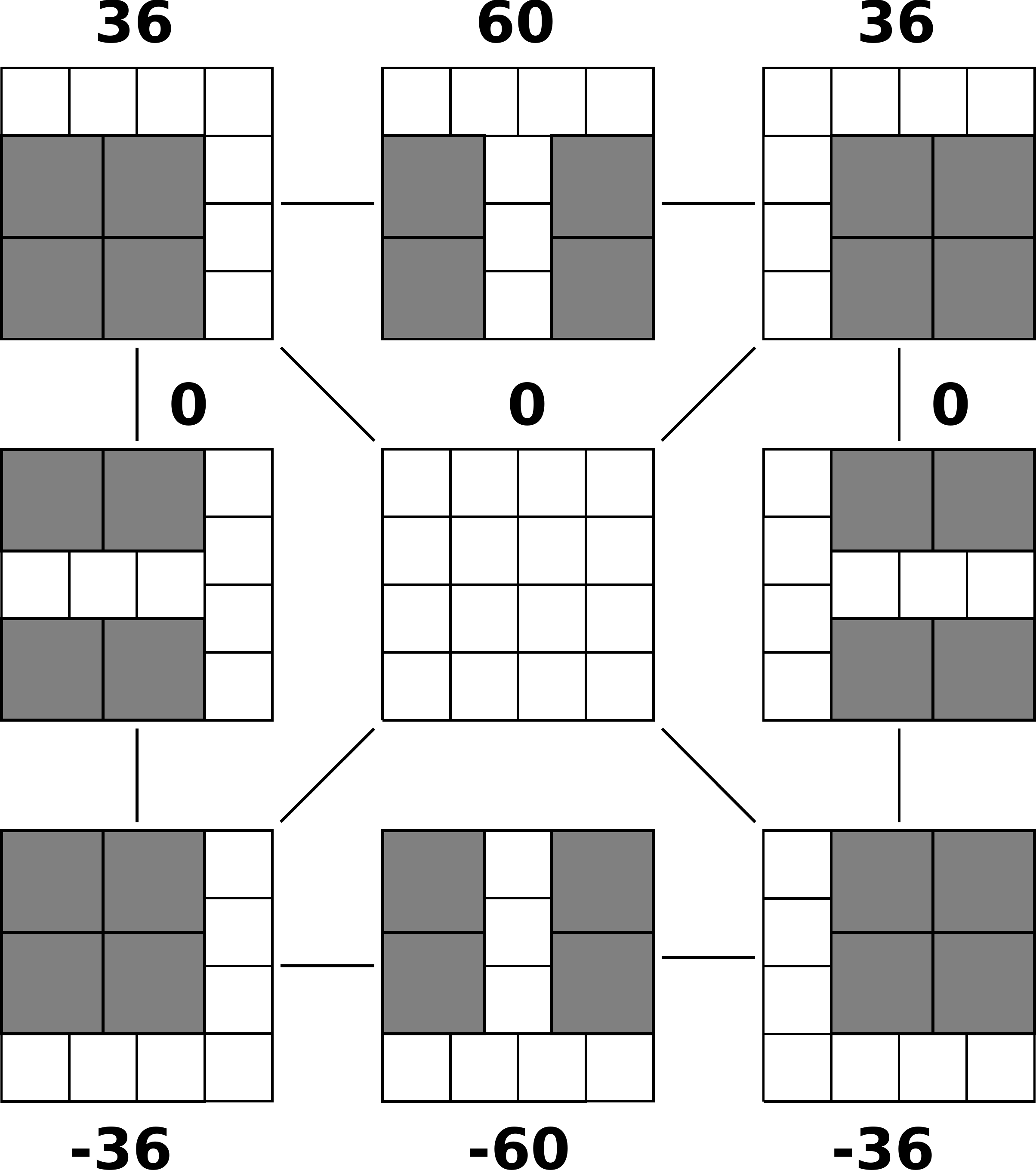}
\end{center}
\caption{All $(2,3)-$square tilings of an $8 \times 8$ square with their corresponding heights.}
\label{fig: tiling_graph_2_3}
\end{figure}

$(2,3)-$square tilings (and $(m,s)$ in general) are completely different from the $(1,s)$ case. It is no longer possible to glue two parts of a tiling together that easily. It depends a lot on the boundary of the region. The height function is linear over the size of the region. Figure \ref{fig: tiling_graph_2_3} shows all possible $(2,3)-$square tilings of a $8 \times 8$ region where two configurations are connected if a flip can be made to go from one to the other.


\subsection{Markov chain}
\label{subsec: Markov chain}

Let $R$ be a finite simply connected region of area $N$ of $\mathbb{Z}^2$. If $R$ can be tiled by squares of sizes $m,s$, denote the set of all possible tilings by $ \Omega_R $ (we omit $m,s$ for simplicity). Let us define a Markov chain \textbf{MC$_{square}$} for square tilings. The idea of the chain is to pick a site of the region uniformly at random at each step and do a flip if possible. The little black crosses in the Figures \ref{fig: flip12}, \ref{fig: flip23} mark the site that has to be chosen in order to perform a flip. 

First of all, for $(m,s)-$tilings, $m,s>1$, the site that has to be chosen to perform a flip is always in the upper right corner of the block (there are horizontal/vertical and central blocks in which a flip can be performed, as it was defined in Subsection \ref{subsec: flips}). One can see from Figure \ref{fig: flip23} that once a site of a tiling is chosen, there is no ambiguity in what type of flips can be performed: in order to perform a flip, one has to consider the three blocks in which this site is in the upper right corner, and there is at most one type of flips that can be performed. If $m=1$, 
one can see from Figure \ref{fig: flip12} that there is no ambiguity about what kind of flip has to be performed for $(1,s)-$tilings in a horizontal/vertical block: if one wishes to perform a central flip, then the upper right corner of the big  square has to be chosen, if one wishes to perform a horizontal flip, then one has to choose the site that will correspond to the upper right corner of the big square after it is moved.

Second of all, in order to choose one of the two configurations of the block (in the case $m=1$, the choice nails down to choosing in which direction we want to push the big square), let us recall that each flip has two directions, so before performing a flip we first choose one of the two possible directions. 

Let us now formally present the Markov chain:

\subsection*{MC$_{square}$:}

Let $T_0 \in \Omega_R$ be an initial configuration. At each time $t$:
\begin{itemize}
\item choose an inner vertex of $R$ u.a.r.,
\item choose a direction of the flip with equal probability,
\item perform either a vertical/horizontal or central flip in the tiling $T_t$ in the chosen direction if possible thus defining the tiling $T_{t+1}$, otherwise stay still.
\end{itemize}

\begin{lemma}
MC$_{square}$ has uniform stationary distribution.
\end{lemma}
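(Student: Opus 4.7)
The plan is to show that the transition matrix of MC$_{square}$ is symmetric, i.e.\ $P(T,T') = P(T',T)$ for all pairs $T,T' \in \Omega_R$, from which the uniform distribution being stationary follows immediately (since a doubly stochastic transition matrix always has the uniform distribution as a stationary distribution). Together with irreducibility and aperiodicity this gives uniqueness.

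First I would verify irreducibility and aperiodicity. Irreducibility is immediate from the cited result of \cite{BR02} that $\Omega_R$ is connected by flips: any two tilings can be joined by a sequence of flips, and each single flip is executed by MC$_{square}$ with positive probability. Aperiodicity follows from the existence of self-loops: for any fixed configuration $T$, there is an inner vertex whose selection leads to no legal flip, or a direction of flip that is not realisable in the surrounding block, so $P(T,T) > 0$.

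The main step is symmetry. Let $T \ne T'$ be two tilings with $P(T,T') > 0$. By construction of the chain, $T$ and $T'$ then differ by a single flip in some horizontal, vertical, or central block $B$. As stressed in Subsection \ref{subsec: Markov chain}, there is a unique inner vertex $v(B)$ of $R$ (the upper-right corner of $B$) whose selection together with a specific choice of direction turns $T$ into $T'$. Crucially, this vertex $v(B)$ and this block $B$ are the same whether one starts from $T$ or from $T'$: the block where the local configuration changes is intrinsic to the pair $\{T,T'\}$, and once $v(B)$ is chosen the local transformation is unambiguous. The only difference is the choice of direction (``up'' vs.\ ``down''), each of which is selected with probability $1/2$. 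Hence
\[
P(T,T') \;=\; \frac{1}{N'} \cdot \frac{1}{2} \;=\; P(T',T),
\]
where $N'$ is the number of inner vertices of $R$. Symmetry then gives $\sum_T P(T,T') = \sum_T P(T',T) = 1$, so $P$ is doubly stochastic and the uniform distribution $\pi(T) = 1/|\Omega_R|$ satisfies $\pi P = \pi$.

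The main obstacle, and what must be argued carefully, is precisely the ``no ambiguity'' point: one has to check that for each of the block types (horizontal, vertical, central, and the extra degenerate horizontal/vertical case when $m=1$) the designated vertex really is determined by the block and not by the side of the flip. The paper addresses this in Subsection \ref{subsec: Markov chain}, so the verification reduces to inspecting Figures \ref{fig: flip12} and \ref{fig: flip23} and observing that in every case the marked cross sits in the same position in both configurations of the block. Once this combinatorial fact is recorded, the symmetry argument above applies, and the lemma follows.
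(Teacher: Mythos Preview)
Your proof is correct and follows essentially the same route as the paper: irreducibility from the flip-connectivity of \cite{BR02}, aperiodicity from self-loops, and uniformity from the symmetry $P(T,T')=P(T',T)$ of the transition matrix. The only difference is that you spell out the symmetry argument (the block and designated vertex are intrinsic to the pair $\{T,T'\}$, and each direction is chosen with probability $1/2$) in more detail than the paper, which simply asserts it.
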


\begin{proof}

The probability to reach every tiling $T \in \Omega_R$ is positive since the state space is connected, so MC$_{square}$ is irreducible. The probability to stay in the same state is positive, therefore it is aperiodic. This implies that the chain is ergodic and thus has unique stationary distribution. Moreover, the probability matrix of MC$_{square}$ is symmetric: indeed, $\mathbb{P}(T,S) = \mathbb{P}(S,T)$ for all pairs $(T,S)$ from $\Omega_R$ (that are different by one flip). The symmetry of the probability matrix ensures the uniformity of the stationary distribution. 
\end{proof}

\section{Mixing time for $(1,s)-$square tilings}
\label{subsec: fast mixing}

The question of proving fast mixing turns out to be difficult. There are only few tiling systems for which the dynamics were proven to be fast mixing (for example, \cite{LRS95, W04}), where by fast mixing we mean in polynomial number of steps in the size of the tiled region. We show that in some particular cases, the dynamics is rather easy to understand and is related to known structures such as independent sets.


\begin{theorem}\label{thm: mixing time for nxlong }
MC$_{square}$ is rapidly mixing for $(1,s)-$square tilings of a rectangular region of size $ n \times \log  n $.
\end{theorem}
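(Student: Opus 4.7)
The natural approach is to apply the path coupling theorem of Dyer and Greenhill (Theorem~\ref{theorem: DG}). I would equip $\Omega_R$ with the flip-distance metric $\varphi$, for which the diameter $D$ is $O(n\log n)$, and take $U$ to be the set of pairs of tilings differing by exactly one flip; this makes the decomposition hypothesis of the theorem trivial. The coupling is the identity coupling, in which both chains draw the same inner vertex and the same direction at each step.

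Given $(X_t,Y_t)\in U$ differing by a single flip $F$ whose support lies in some $s\times s$ central block or $s\times(s+1)$ horizontal/vertical block, I would classify the possible moves of the coupled chain into three categories. \textbf{First}, if the chain selects the vertex on which $F$ acts with a direction realising $F$ (or its inverse), the two tilings merge and $\varphi$ drops by $1$; this happens with probability at least $\Omega(1/(n\log n))$. \textbf{Second}, if the chain selects a vertex whose flip-block is disjoint from the support of $F$, the attempted flip behaves identically on both tilings, so $\varphi$ is unchanged. \textbf{Third}, for a bounded number of vertices whose flip-block overlaps the support of $F$, the flip may be enabled in one tiling but not in the other, potentially raising $\varphi$ by $1$ or $2$. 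The key combinatorial observation is that this "interface" contains only $O(s^2)$ vertices regardless of $X_t$ and $Y_t$, since the local state of a flip-block is determined by an $O(s^2)$ neighborhood.

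I would then verify that the contributions from the third category are cancelled (in expectation) by matching them with sibling moves that delete disagreements, giving $\mathbb{E}[\varphi(X_{t+1},Y_{t+1})]\leq \varphi(X_t,Y_t)$, and that a strict change occurs with probability $\alpha=\Omega(1/(n\log n))$. Applying the second case of Theorem~\ref{theorem: DG} with $D=O(n\log n)$ and $\alpha^{-1}=O(n\log n)$ then gives $\tau_{mix}(\varepsilon)=O\bigl((n\log n)^{3}\log\varepsilon^{-1}\bigr)$, which is polynomial in $n$. The $\log n$ height plays two roles: it keeps $D$ polynomial, and it bounds the number of big squares that can stack in a single column by $O(\log n/s)$, which prevents cascading disagreements from propagating across macroscopic regions of the strip.

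The main obstacle is the bookkeeping for the third category. A $(1,s)$ horizontal/vertical block admits \emph{three} configurations rather than two, so a nearby flip can interact with $F$ in several qualitatively distinct ways, and each subcase must be shown to contribute non-positively to $\mathbb{E}[\Delta\varphi]$. A clean pairing argument---matching every move that creates a new disagreement with a sibling move of equal probability that removes one---should suffice, but the pairing has to be spelled out flip-type by flip-type. Sites within distance $O(s)$ of the boundary of the $n\times\log n$ strip require a separate routine check, since some flips are disabled there; these affect only an $O(\log n)$-sized set of vertices and can be absorbed into the polynomial bound without changing the asymptotics.
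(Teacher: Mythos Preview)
Your approach has a genuine gap: the pairing argument you rely on to show $\mathbb{E}[\Delta\varphi]\le 0$ does not hold for the unweighted chain. Take $(X_t,Y_t)$ differing by one central flip, say $X_t$ has an $s\times s$ square at position $p$ while $Y_t$ has $s^2$ unit squares there. The only ``good'' site is $p$ itself: either direction there coalesces the pair, contributing $-1/N$ to the expected change. The ``bad'' sites are exactly the $(2s-1)^2-1$ vertices $v\ne p$ whose $s\times s$ block overlaps the support at $p$: at each such $v$, the ``place a big square'' direction succeeds in $Y_t$ but is blocked in $X_t$, raising $\varphi$ by one with probability $1/(2N)$, while the opposite direction does nothing in either tiling. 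There is no sibling move at $v$ that removes a disagreement, so no pairing is available, and one gets
\[
\mathbb{E}[\Delta\varphi]\;\ge\;-\frac{1}{N}+\frac{(2s-1)^2-1}{2N}\;>\;0\quad\text{for every }s\ge 2.
\]
This is precisely the computation carried out in Section~\ref{subsec: weighted} (see \eqref{eq: thm $(1,s)$ weighted coupling 1 }--\eqref{eq: thm $(1,s)$ weighted coupling 2 }), which is why a weight $\lambda\le 1/((2s-1)^2-2)$ is needed there to make path coupling go through. Adding horizontal/vertical/diagonal drags does not rescue the argument either: even in the most favourable $(1,2)$ case this only pushes the threshold to $\lambda\le 1/3$ (Theorem~\ref{th: weighted mixing}), still well short of $\lambda=1$.

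The paper's proof uses a completely different technique, Sinclair's canonical paths, and the $\log n$ height enters for a reason unrelated to the one you suggest. For each ordered pair $(X,Y)$ one routes a canonical path through the tiling graph by sweeping vertices in lexicographic order; at an intermediate edge $e_i$ the configuration agrees with $Y$ to the left and with $X$ to the right, up to a vertical strip of width $O(s)$ around column $i$. The height $\log n$ is used solely to bound the number of ways that strip can be tiled by $\exp(O(\log n))=n^{O(1)}$, which in turn bounds the congestion through $e_i$ by $|\Omega|\cdot n^{O(1)}$ and yields a polynomial mixing-time bound via \eqref{eq: cost for mixing}. Bounding the number of stacked big squares in a column plays no role.
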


\begin{figure}
\centering
\includegraphics[scale=0.55]{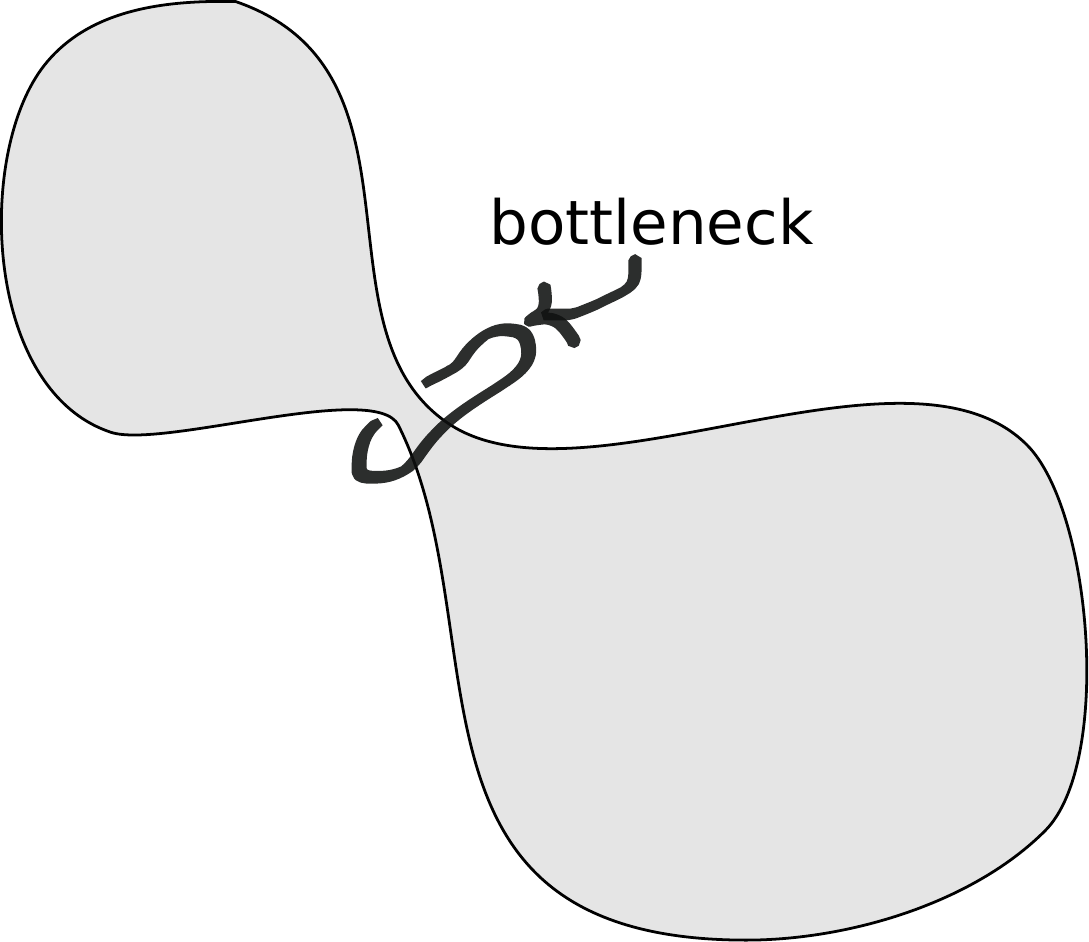}
\caption{A graph with a bottleneck.}
\label{fig: bottleneck}
\end{figure}

\begin{proof}

We are going to use the \textbf{canonical paths} argument developed by Sinclair \cite{S92}. The idea of canonical paths is the following: fast mixing occurs when a tiling graph does not have a \textbf{bottleneck}. A bottleneck is a geometric feature of the state space of a MC that controls mixing time (see Figure \ref{fig: bottleneck} for a sketch of a graph with a botteleck). If there is a bottleneck, then it divides the set of states of the MC (in our case, tiling configurations from $\Omega$) into two subsets connected by a thin ``tunnel''. This slows down the mixing as it becomes hard to get from one subset to the other. Canonical paths allow to formalize absence (or presence) of a bottleneck. For each pair of configurations a canonical path or a set of paths are defined, that allow to get from one tiling to the other via flips. Consider an edge in the tiling graph (it connects two configurations different by one flip). If for any edge the number of paths that pass through this edge is relatively small (linear over the cardinality of the set of configurations), the graph does not have a bottleneck. 

Let us present the construction for the $(1,2)$ case. It is exactly the same for the general case. Place the rectangle on the grid with the lower left corner in $(0,0)$. Define a lexicographic order on the set of inner vertices of the $(n+1)\times (\log  n+ 1)$ rectangle: $$ \left\lbrace (1,1), (1,2), \ldots, (1, \log n),(2,1), (2,2),  \ldots, (n, \log n - 1), (n, \log n) \right\rbrace $$  and denote them as $ \left\lbrace 1, 2, \ldots \tilde{n} \right\rbrace $, where $ \tilde{n} = n \times \log n$. 

Consider two tilings $X$ and $Y$ from $\Omega$. In order to get from $X$ to $Y$, it is sufficient to make a flip in every vertex of the region, in other words, the diameter of the tiling graph is not greater than the area of the region. With the use of canonical paths, we can structure the order in which we perform flips. Define a canonical path $p(X,Y)$ from $X$ to $Y$ as follows: start from $X$, follow the vertices in the defined order in windows of size $2 \times 2$ and in each window perform at most one flip in each vertex if it decreases the flip-distance (Hamming distance in the tiling graph) to $Y$. Such ordering of flips at each step either corrects a $2 \times 2$ box in $X$ by performing a central flip or decreases distance by one by performing a vertical/horizontal flip in a vertical/horizontal block. After all vertices are met once, the path reaches  $Y$. Due to the construction, for any pair of tilings such path exists and is unique. The ordered path is now a permutation $\sigma$ of vertices $1 \ldots \tilde{n}$, where for each $i:\;\;$ $  i - \log n - 1 \leq \sigma(i) \leq i + \log n + 1$. 

\begin{figure}
\centering
\includegraphics[scale=0.55]{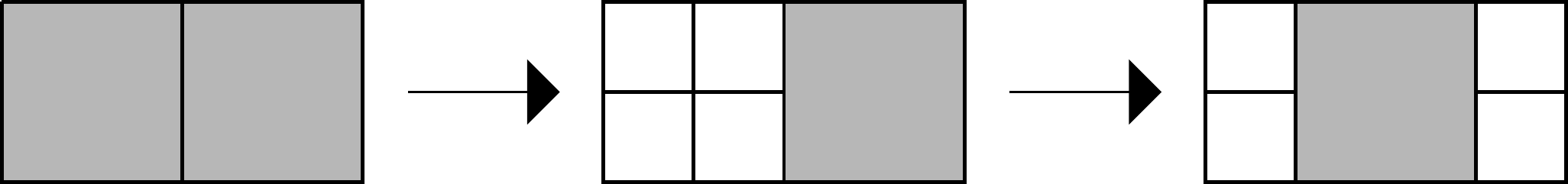}
\caption{Canonical path for $(1,2)-$ tilings in a $4\times 2$ region.}
\label{fig: canonical}
\end{figure}

Denote by $\pi$ the chain's stationary distribution. Send $\pi(X)\pi(Y)$ units of \textbf{flow} through $p(X,Y)$ for any pair $(X,Y)$. The flow through an edge is just the sum of all the flow that travels through the edge. The idea of the canonical paths method is to prove that any edge in the tiling graph has a small number of canonical paths going through it and therefore little flow. Consider an edge $e_i = (Z_{\sigma(i)}, Z_{\sigma(i+1)})$, where tilings $Z_{\sigma(i)}$ and $Z_{\sigma(i+1)}$ only differ in the $\sigma(i)-$th position.

\begin{lemma}
\label{lemma: canonical}
There are no more than $\vert \Omega \vert n $ paths passing through each $e_i$.
\end{lemma}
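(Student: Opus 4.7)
The plan is to apply the classical encoding trick from Sinclair's canonical paths argument. Fix an edge $e_i=(Z_{\sigma(i)},Z_{\sigma(i+1)})$ in the tiling graph; because flips are local and the canonical path processes sites in column-major (lex) order, this edge identifies a specific column $c^{\star}$ of the $n\times\log n$ rectangle near which the flip happens. The goal is to construct an injection
$$
\bigl\{(X,Y)\in\Omega\times\Omega\;:\;e_i\in p(X,Y)\bigr\}\;\hookrightarrow\;\Omega\times\mathcal{B},
$$
with $|\mathcal{B}|\le n$, from which the bound $|\Omega|n$ follows immediately.

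First I would construct $\eta(X,Y)\in\Omega$ by gluing the ``already-corrected'' part of $Y$ on the left to the ``not-yet-touched'' part of $X$ on the right, across column $c^{\star}$. The permutation bound $|\sigma(j)-j|\le\log n+1$ guarantees that at the moment $e_i$ is traversed the intermediate tiling coincides with $Y$ strictly to the left of a narrow band of $O(1)$ columns around $c^{\star}$, and with $X$ strictly to the right of that band; in particular the two sides are tileably compatible with at least one completion. Choosing, say, the lexicographically least completion of the band yields a well-defined $\eta(X,Y)\in\Omega$.

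Second I would explain how to invert $\eta$. Outside the narrow band, $X$ and $Y$ can be read directly off $\eta(X,Y)$. Inside the band the only missing datum is the ``interface profile'' across column $c^{\star}$, namely which of the $\log n$ rows of that column are straddled by a large tile of $X$. In the $(1,2)$ case this profile is a bit-string of length $\log n$ giving at most $2^{\log n}=n$ possibilities, and in the general $(1,s)$ case the count is analogous. Once the profile is fixed, $X$ inside the band is determined from the neighbouring already-known columns, and then $Y$ inside the band is forced by the deterministic canonical-path rule together with the identity of the flip $e_i$.

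The main obstacle I expect is the second step: justifying that the interface profile really captures all of the residual ambiguity. Concretely one has to argue that within the band the restriction of $X$ is recovered from the interface profile plus the already-glued neighbouring columns, and then that the restriction of $Y$ is determined from $X$, $e_i$ and the canonical ordering of flips. Once these verifications are in place the lemma follows: at most $|\Omega|\cdot n$ canonical paths traverse each edge $e_i$. It is precisely at the profile-counting step that the logarithmic height of the rectangle is indispensable, since this is what makes the interface profile count at most $n=2^{\log n}$ rather than an arbitrary polynomial in $n$.
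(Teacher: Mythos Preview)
Your overall strategy is the same as the paper's: encode each pair $(X,Y)$ whose canonical path uses $e_i$ by a single element of $\Omega$ together with at most $n$ bits of ``strip data'', and conclude by injectivity. The paper carries this out by gluing the \emph{left} part of $X$ to the \emph{right} part of $Y$ and then bounding the number of ways to fill a width--$O(1)$ strip of height $\log n$.

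There is, however, a genuine gap in your encoding: the gluing is reversed. You define $\eta(X,Y)$ as ``$Y$ on the left, $X$ on the right''. But you yourself observe that the intermediate state $Z_{\sigma(i)}$ already coincides with $Y$ to the left of the band and with $X$ to the right. Hence, outside the band, your $\eta(X,Y)$ equals $Z_{\sigma(i)}$, which is determined by the edge $e_i$ alone and carries no information about the particular pair $(X,Y)$. Consequently the map $(X,Y)\mapsto(\eta(X,Y),\text{profile})$ cannot be injective: take any $X,X'$ agreeing on the right of the band and any $Y,Y'$ agreeing on the left of the band with the same interface profile, and all four paths give the same image.

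The fix is to take the \emph{complementary} encoding $\eta(X,Y)=X\big|_{\text{left}}\cup Y\big|_{\text{right}}$ (plus a canonical completion of the band). Then from $e_i$ you read off $Z_{\sigma(i)}$, which gives $Y$ on the left and $X$ on the right; from $\eta$ you read $X$ on the left and $Y$ on the right; and the strip data resolves the band. This is exactly what the paper does. Once you reverse the gluing, your more explicit accounting of the profile as a length--$\log n$ bit-string is a perfectly good (indeed slightly sharper) way to arrive at the factor $n$; note though that you will need to record interface information for \emph{both} $X$ and $Y$ across the band, not just $X$, so be careful that the total count stays $n^{O(1)}$ in the $(1,s)$ case.
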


\begin{proof}
Consider the canonical path $p$ going from $X$ to $Y$:
$$
p(X,Y) = \lbrace X = Z_{\sigma(1)}, \ldots, Z_{\sigma(i)}, Z_{\sigma(i+1)}, \ldots Z_{\sigma(\tilde{n})}= Y  \rbrace.
$$

Tiling $X$ agrees with $Z_{\sigma(i)}$ at least on the vertices $\sigma(i+1), \ldots, \sigma(\tilde{n})$. $Y$ agrees with $Z_{i}$ at least on the vertices $\sigma(1), \ldots, \sigma(i)$. One could think of reconstructing $X$ and $Y$ using $e_{i}$ and the first $i$ vertices $1, \ldots, \sigma(i)$ of $X$ at the last $\tilde{n} - i$ vertices $\sigma(i), \ldots \sigma(\tilde{n})$ of $Y$, but that would mean that it was possible to glue two pieces of a tiling with holes together. Consider instead a strip of width $4$ around the windows that contain the vertex $i$ between the two part of tilings which can be filled in at most $exp(\log n)$ ways. 

We have therefore just constructed a map from the set of paths $\lbrace p_{e_i} \rbrace$ that pass through $e_i$ to the state space $\Omega$. This construction maps each path $p_{e_i}$ to $n$ tilings that are different only in the $ 4 \times \log n$ strip around the vertex $i$. Each path is mapped to a different ``family'' of $(1,s)-$ tilings, where each family corresponds to tilings of the strip around a given vertex. There are not more than $\vert \Omega \vert $ possible families and therefore not more than $\vert \Omega \vert n $ paths through $e_i$.

\end{proof}

Let us continue with the proof of the theorem. Applying Lemma \ref{lemma: canonical}, the flow along each $e_i$ is at most $\vert \Omega \vert n$.

The \textbf{cost} of the flow $f$ is 
$$
cost(f) := \max _{e_i} \frac{\vert \Omega \vert n \pi(X) \pi(Y)}{c(e_i)},
$$
where  $X$ and $Y$ are the endpoints of the paths that go through $e$, $c(e_i)$ is the \textbf{edge capacity}:
$$
c(e_i): = \pi(Z_i) \mathbb{P}(Z_i, Z_{i+1}).
$$

Since 
\begin{equation}
\label{eq: pi for sq}
\pi(X)  = \pi(Y) = \frac{1}{\vert \Omega \vert}, 
\end{equation}
one gets the following bound on the cost:

\begin{equation}
\label{eq: cost}
cost(f) \leq 2n\tilde{n} = 2n^2\log n.
\end{equation}

There is the following relation between the cost function and the mixing time (see \cite{S92}, Proposition 1 and Corollary 6'):

\begin{equation}
\label{eq: cost for mixing}
\tau_{mix}(\varepsilon) \leq 8cost^2 (\ln \pi(X)^{-1} + \ln\varepsilon^{-1}).
\end{equation}

Plugging \eqref{eq: pi for sq} and \eqref{eq: cost} into \eqref{eq: cost for mixing}, the following bound is obtained on the mixing time of MC$_{square}$:

\begin{equation}\label{eq: nxlong mixing bound}
\tau_{mix}(\varepsilon)\leq 32 n^4\log^2 n (c_2 n \log n + \ln\varepsilon^{-1}).
\end{equation}

The same reasoning works for any $(1,s)-$ square tiling -- the vertical strip has to be taken of length $const(s)$. And  $\tau_{mix}$ stays polynomial.
\end{proof}

\textbf{Remark 1.} Theorem \ref{thm: mixing time for nxlong } works not only for rectangular regions but for any regions, such that in any site the region can be divided in two parts via a strip of width $const(s)$ and height $\log n$.

\textbf{Remark 2.} Simulations via coupling (see Section \ref{subsec: simulations} for the description of the algorithm) suggest the $O(n^2)$ bound for the $(1,2)$ case, which shows that the bound obtained in \eqref{eq: nxlong mixing bound}  is not optimal. Let us remind that the evident lower bound is of the size of the diameter of the tiling graph which is simply the area of the region. 

\section{Weighted Glauber dynamics}
\label{subsec: weighted}

It is not clear how to prove fast mixing in the general case, so in this part let us consider a weighted version of the dynamics for the square tilings. It simply means that we favorize some configurations more than others. In the case of $(1,s)-$tilings it seems that the big squares significantly slow down the mixing time, so a way to go around it is to put less probability weight on the big squares. A Markov chain associated with the system that does local transformations, e.g. flips, (whose stationary distribution is the desired distribution) is generally referred to as the \textbf{Glauber dynamics}. It is popular in statistical physics and used to describe different behaviours of systems (e.g., Ising model, Hardcore model, independent sets, perfect matchings, etc.). Weights $\lambda$ assigned to the particles usually correspond to the energy. Weights can help to detect presence of \textbf{phase transitions} in these systems -- there exists a critical point $\lambda_c$ such that the dynamics is fast mixing (in polynomial time over the size of the problem) below this critical point, for all $\lambda < \lambda_c$, and is slow mixing (in exponential time) for all  $\lambda > \lambda_c$. It is usually difficult to understand what is happening in the critical point. Let us just point out that that for a variety of studied models it has not been possible to analyze the behaviour of Markov chains at critical points and sometimes in their neighbourhoods.   

It seems that for $(1,s)-$tilings $\lambda = 1$ is the critical point. $\lambda = 1$ corresponds to the unweighted version of the chain. We consider the $(1,s)-$case and prove fast mixing for certain $\lambda \leq \frac{1}{(2s-1)^2-2}$. In the case $s=2$, the conditional bound on $\lambda$ is better because of the relation to the independent sets. 

\subsection{$(1,2)-$square tilings with weights}

Consider $(1,2)-$ square tilings as the King's problem on a toroidal region of $\mathbb{Z}_2$. The King's problem is a problem of placing non-attacking kings on a chessboard: if a king is placed in the site, none of 8 neighbouring sites can be occupied. It can be seen as an independent set problem on the $8-$adjacency graph on the square grid graph $G = (V,E)$ of degree $8$ (see Figure \ref{fig: 8_regular}). An independent set of $G$ is a subset of vertices such that no two of them are adjacent (for more information about independent sets see, e.g., \cite{DG00,DFJ02}).

\begin{figure}
\centering
\includegraphics[scale=0.7]{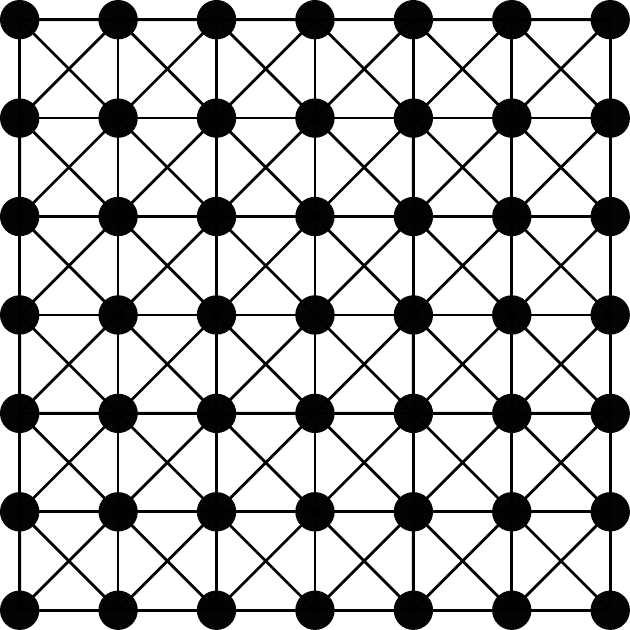}
\caption{8-regular graph on the square grid.}
\label{fig: 8_regular}
\end{figure}

Consider the Glauber dynamics of this system. Let $I(G)$ be the set of all independent sets of $G$. The probability of a configuration $X$ is given by 
$$
\pi(X) := \frac{\lambda^{\vert X \vert}}{Z(\lambda)},
$$
where $\lambda$ is a positive parameter called a \textbf{weight} of a configuration and $Z(\lambda) -$ the \textbf{partition function} of the system:
$$
Z(\lambda) := \sum _{X \in I(G)  } \lambda^{\vert X \vert}.
$$

Define the weighted version of MC$_{(1,2)-square}$ as follows.  We add a \textbf{diagonal dragging} flip which is shown in Figure \ref{fig: drag}.

\subsection*{MC$_{(1,2)-square}$ with $\lambda > 0$:}
Start from $X_0$. Let $X_t$ be the configuration at time $t$. At time $t$:
\begin{itemize}
\item Choose a site of the region u.a.r.,
\item If a central flip can be made, put four $1\times1$ squares with probability $\frac{1}{\lambda +1 } $, put a $2\times2$ square with probability $\frac{\lambda}{\lambda +1 } $. Else, if a horizontal/vertical/diagonal (dragging) flip can be made, perform it with probability $\frac{\lambda}{4(\lambda +1)}.$
\item Otherwise, do nothing and set $X_{t+1} = X_{t}$.
\end{itemize}

\begin{figure}
\centering
\includegraphics[scale=0.5]{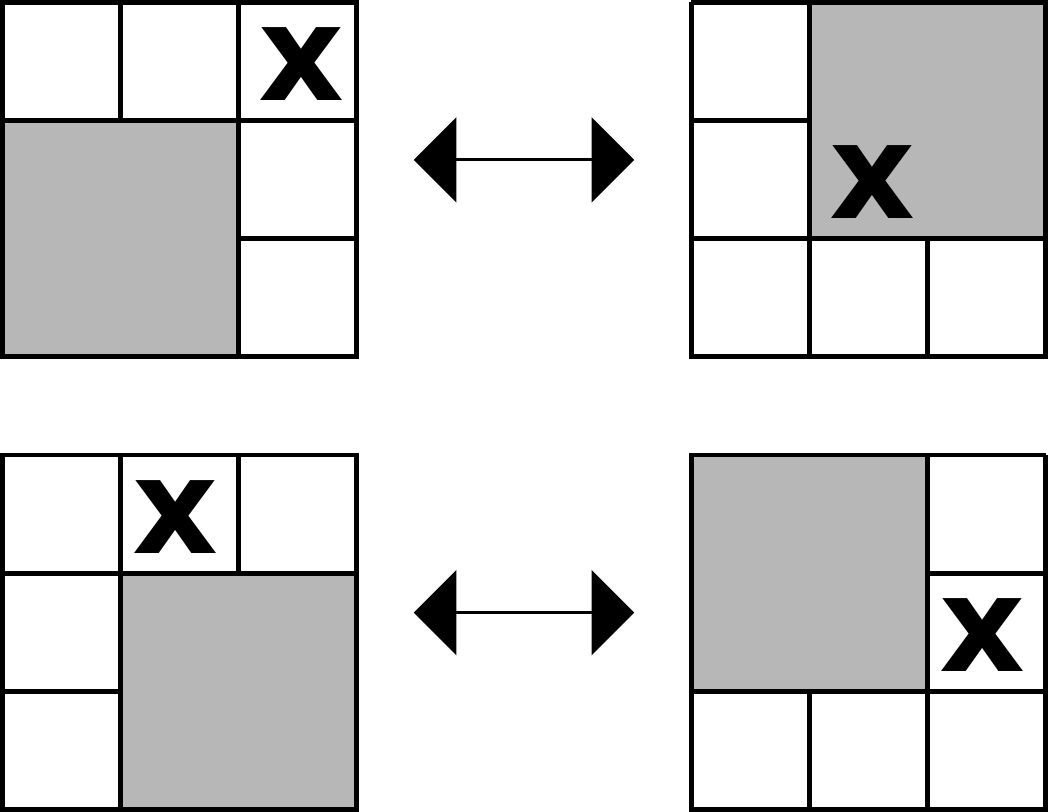}
\caption{A diagonal drag flip for $(1,2)-$ tiling. Each cross marks the position of the upper right corner of the big square after the flip.}
\label{fig: drag}
\end{figure}

This version MC$_{(1,2)-square}$ corresponds exactly to the \textit{delete/insert/drag} chain MC$_{drag}$ for the independent sets \cite{DG00}, which is a slightly different version of the Luby-Vigoda chain \cite{LV99}:

\subsection*{MC$_{drag}$  for independent sets:}

Start from $X_0$. Let $X_t$ be the configuration at time $t$. At time $t$:
\begin{itemize}
\item Choose $v$ from the set of vertices u.a.r.,
\item If $v \in X_t$, then \textbf{delete} $v$ : $X_{t+1} = X_{t} \setminus \lbrace v \rbrace$ with probability $\frac{1}{\lambda +1} $,
\item If $v \notin X_t$, then \textbf{add} $v$ : $X_{t+1} = X_{t} \cup \lbrace v \rbrace$ with probability $\frac{\lambda}{\lambda +1} $,
\item If $v \notin X_t$ and $v$ has a unique neighbour $u$ in $X_t$, then \textbf{drag} $v$ : $X_{t+1} = (X_{t} \cup \lbrace v \rbrace) \setminus \lbrace u \rbrace$ with probability $\frac{\lambda}{4(\lambda +1)} $,
\item Otherwise, do nothing and set $X_{t+1} = X_t$.
\end{itemize}

The two main tasks are to approximately evaluate the partition function $Z(\lambda)$ and to approximately sample from $I(G)$ according to the stationary distribution. When the graph's maximal degree $\Delta$ is greater than $2$, approximate evaluation of $Z(\lambda)$  and approximate sampling from $I(G)$ can be done using a rapidly mixing chain (see, for example, \cite{JS96}). Using the path coupling argument (Theorem \ref{theorem: DG} in Preliminaries), Dyer and Greenhill proved fast mixing for MC$_{drag}$ with sufficiently small $\lambda$. Rapid mixing for MC$_{(1,2)-square}$ then follows directly when $\lambda \leq \frac{1}{3}$. 
 
\begin{theorem}[Dyer-Greenhill \cite{DG00}]
\label{thm: DG independent}
Let $G=(V,E)$ be a graph with maximal degree $\Delta$ and $ \vert V \vert = n$. MC$_{drag}$ is rapidly mixing for $\lambda \leq 2/(\Delta - 2)$. 
\begin{enumerate}
\item When $\lambda < 2/(\Delta - 2)$, 

$$
\tau_{mixing}(\varepsilon) \leq \frac{2(1+\lambda)}{2-(\Delta-2)\lambda} n\log (n\varepsilon^{-1}).
$$
\item When $\lambda = 2/(\Delta - 2)$, 
$$
\tau_{mixing}(\varepsilon) \leq \lceil 2n^2(1+ \lambda)(\log n + 1)\rceil \lceil \log \varepsilon^{-1}\rceil.
$$
\end{enumerate}
\end{theorem}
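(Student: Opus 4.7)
The plan is to apply the path coupling theorem (Theorem~\ref{theorem: DG}) with the Hamming metric $\varphi(X,Y)=|X\triangle Y|$ and neighborhood $U$ consisting of pairs of independent sets differing in exactly one vertex, so the diameter satisfies $D\leq n$. Given $(X,Y)\in U$, write $Y=X\cup\{w\}$ without loss of generality, and couple by drawing the same vertex $v\in V$ uniformly at random and using a single coin toss $\xi\in[0,1]$ to resolve the delete/add/drag attempts in both chains simultaneously. The analysis of the one-step change in $\varphi$ then splits according to where $v$ lies relative to $w$.

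The easy case is $v=w$: because $Y$ is an independent set containing $w$, no neighbor of $w$ sits in $X=Y\setminus\{w\}$, so the $Y$-chain can only attempt to delete $w$ (with probability $1/(\lambda+1)$) while the $X$-chain can only attempt to add $w$ (with probability $\lambda/(\lambda+1)$); synchronizing via $\xi$ forces exactly one of these two attempts to succeed, and $\varphi$ drops to $0$ deterministically. For $v\notin\{w\}\cup\Gamma(w)$, the occupancies at $v$ and at $\Gamma(v)$ agree in $X$ and $Y$, so the coupled chains execute identical moves and $\varphi$ is unchanged. The delicate case is $v\in\Gamma(w)$: one has $v\notin X\cup Y$, and writing $k=|\Gamma(v)\cap Y|\geq 1$, I check that only $k\in\{1,2\}$ produces a non-trivial update; the $k=1$ subcase allows a drag in $Y$ (swapping $w$ for $v$) and an add in $X$, so $\varphi$ lands on $0$, $1$, or $2$, while $k=2$ blocks $Y$ entirely and forces a drag in $X$ swapping a shared vertex for $v$, sending $\varphi$ to $3$. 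A short calculation shows that both subcases contribute exactly $\lambda/(2(\lambda+1))$ to the expected change.

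Summing over the uniform choice of $v$ and using $|\Gamma(w)|\leq\Delta$, one obtains
$$
\mathbb{E}[\varphi(X_{t+1},Y_{t+1})-\varphi(X_t,Y_t)]\;\leq\;\frac{1}{n}\left(-1+\Delta\cdot\frac{\lambda}{2(\lambda+1)}\right)=\frac{(\Delta-2)\lambda-2}{2n(\lambda+1)},
$$
so the coupling is contracting in expectation exactly when $\lambda\leq 2/(\Delta-2)$. For $\lambda<2/(\Delta-2)$ I set $\beta=1+\mathbb{E}[\varphi(X_{t+1},Y_{t+1})-\varphi(X_t,Y_t)]<1$ and apply part~(1) of Theorem~\ref{theorem: DG} with $D=n$; solving $1-\beta=(2-(\Delta-2)\lambda)/(2n(\lambda+1))$ reproduces the first stated bound. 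For $\lambda=2/(\Delta-2)$ one has $\beta=1$, and I invoke part~(2), taking $\alpha$ to be a lower bound on $\mathbb{P}(\varphi(X_{t+1},Y_{t+1})\neq\varphi(X_t,Y_t))$: the $v=w$ event alone gives $\alpha\geq 1/n$, and refining by counting the constant-probability changes contributed by neighbors of $w$ in the $k=1,2$ subcases yields the sharper $\alpha$ needed to recover the stated $O(n^2\log n\,\log\varepsilon^{-1})$ bound.

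The main obstacle is the $k=2$ subcase: a drag in $X$ removes a vertex currently \emph{shared} by $X$ and $Y$ while adding $v$, producing a distance jump of two, and it is the precise balance between this larger-but-rarer jump and the ordinary $k=1$ jump of one that pins the threshold at exactly $2/(\Delta-2)$. Tracking the factor $1/4$ in the drag probability $\lambda/(4(\lambda+1))$ so that the $k=1$ and $k=2$ contributions match, and then sharpening $\alpha$ at the threshold to beat the naive $1/n$, are the two bookkeeping steps that will require the most care.
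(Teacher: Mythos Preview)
The paper does not prove this theorem: it is quoted from Dyer--Greenhill \cite{DG00} and used as a black box to obtain Theorem~\ref{th: weighted mixing}. There is therefore no in-paper proof to compare your proposal against.

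That said, your outline is the standard Dyer--Greenhill argument, and your part~(1) is correct. The path-coupling setup with the Hamming metric on pairs $Y=X\cup\{w\}$, the three-way split on the position of $v$, and the computation that each neighbour $v\in\Gamma(w)$ contributes exactly $\lambda/(2(\lambda+1))$ to $\mathbb{E}[\Delta\varphi]$ are all right; in particular the $k=1$ and $k=2$ subcases really do balance thanks to the factor $1/4$ in the drag probability, and substituting $1-\beta=(2-(\Delta-2)\lambda)/(2n(\lambda+1))$ and $D=n$ into part~(1) of Theorem~\ref{theorem: DG} reproduces the stated bound verbatim.

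Part~(2) has a genuine gap. First, the $\alpha$-hypothesis in Theorem~\ref{theorem: DG} must hold for \emph{every} pair $(X_t,Y_t)$ with $X_t\neq Y_t$, not only for pairs in $U$; once $X_t\not\subseteq Y_t$, the vertex $w\in Y_t\setminus X_t$ may have neighbours in $X_t$, and your ``$v=w$ forces $\varphi\to 0$'' argument no longer applies, so you have not actually exhibited a global $\alpha$. Second, even granting the naive $\alpha=1/n$, part~(2) of Theorem~\ref{theorem: DG} yields $\lceil e n^3\rceil\lceil\ln\varepsilon^{-1}\rceil$, and no choice of $\alpha$ in $\lceil eD^2/\alpha\rceil$ produces the specific shape $2n^2(1+\lambda)(\log n+1)$ with its extra $\log n$ factor. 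Obtaining that exact bound requires the finer hitting-time/variance analysis carried out in \cite{DG00} for the boundary case, not merely a sharper constant in Theorem~\ref{theorem: DG}; ``refining $\alpha$'' as you propose will not close this gap on its own.
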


Using Dyer-Greenhill's theorem, we get the following bound on the mixing time for MC$_{(1,2)-square}$:

\begin{theorem}
\label{th: weighted mixing}
Consider $(1,2)-$square tilings of an $n \times n $ toroidal region. 
MC$_{(1,2)-square}$ is rapidly mixing for $\lambda \leq \frac{1}{3}$. The following bounds stand for $\tau_{mix}(\varepsilon)$ and some positive constant $c_1, c_2$.

\begin{enumerate}
\item When $\lambda < 1/3$, 

$$
\tau_{mix}(\varepsilon) \leq c_1  n^2 \log (n\varepsilon^{-1}).
$$

\item When $\lambda = 1/3$, 
$$
\tau_{mix}(\varepsilon) \leq c_2 n^4 \lceil (\log n + 1) \log \varepsilon^{-1}\rceil.
$$
\end{enumerate}

\end{theorem}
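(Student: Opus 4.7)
The plan is to reduce the theorem directly to the Dyer-Greenhill bound for MC$_{drag}$ that was quoted as Theorem \ref{thm: DG independent}. The paper has already established that MC$_{(1,2)-square}$ on an $n \times n$ toroidal region is essentially the same chain as MC$_{drag}$ for independent sets of the $8$-adjacency (King's) graph $G$ on the square torus: each big $2\times2$ square corresponds to a chosen vertex (located at, say, the upper right corner of the tile), the non-overlap condition on big squares translates exactly into the independent-set constraint on $G$, and a central, horizontal/vertical, or diagonal-drag flip corresponds respectively to an insert/delete, an edge-translation, or a diagonal drag at the chosen site. So the first step of the proof is to make this correspondence explicit and verify that the transition probabilities match: both chains pick a site uniformly, then insert with probability $\lambda/(\lambda+1)$, delete with probability $1/(\lambda+1)$, and drag along each of the available directions with probability $\lambda/(4(\lambda+1))$.

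Once the identification is set up, I would apply Theorem \ref{thm: DG independent} with the parameters of our graph. The King's graph on an $n\times n$ torus has $|V| = n^2$ vertices and is regular of degree $\Delta = 8$, so the Dyer-Greenhill threshold $2/(\Delta-2)$ is precisely $1/3$. Substituting these values into the two cases of Theorem \ref{thm: DG independent} gives, for $\lambda < 1/3$,
\begin{equation*}
\tau_{mix}(\varepsilon) \leq \frac{2(1+\lambda)}{2-6\lambda}\, n^2 \log(n^2 \varepsilon^{-1}),
\end{equation*}
and for $\lambda = 1/3$,
\begin{equation*}
\tau_{mix}(\varepsilon) \leq \bigl\lceil 2 n^4 (1+\tfrac{1}{3})(\log n^2 + 1) \bigr\rceil \lceil \log \varepsilon^{-1} \rceil.
\end{equation*}
Absorbing the numerical factors into constants $c_1 = 2(1+\lambda)/(2-6\lambda)$ and $c_2$ (and using $\log(n^2\varepsilon^{-1}) = 2\log n + \log \varepsilon^{-1} = O(\log(n\varepsilon^{-1}))$ for the first bound) yields the two bounds claimed in the theorem.

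The only step requiring genuine care is the correspondence between the two dynamics, which must be checked in all three flip types and in particular for the drag: one must confirm that the set of sites at which a diagonal drag is available in MC$_{(1,2)-square}$ coincides with the set of vertices that have a unique occupied neighbor in the King's graph, and that in both cases there are at most four admissible drag directions with matching weights. The toroidal boundary is convenient here since it makes $G$ exactly $8$-regular everywhere, so no boundary-degree corrections need to be made. Everything else is a direct invocation of the cited black box, and the main obstacle, as is typical in such reductions, is simply keeping the bookkeeping of the correspondence clean enough that the transition probabilities line up exactly with those assumed by Theorem \ref{thm: DG independent}.
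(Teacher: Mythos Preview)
Your proposal is correct and follows exactly the paper's approach: the paper simply states ``Using Dyer-Greenhill's theorem, we get the following bound'' and presents Theorem~\ref{th: weighted mixing} as an immediate corollary of Theorem~\ref{thm: DG independent} applied to the $8$-regular King's graph on $n^2$ vertices, so your write-up is in fact more detailed than what appears in the text. One small quibble: in MC$_{drag}$ the drag is indexed by the \emph{destination} vertex $v$, and once $v$ is fixed the move is uniquely determined by its single occupied neighbour $u$, so there is no ``choice among four directions'' to verify---the factor $4$ in $\lambda/(4(\lambda+1))$ is just Dyer--Greenhill's normalisation, not a count of admissible drags.
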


\subsection{$(1,s)-$square tilings with weights}

Consider now the weighted dynamics for the $(1,s)$ case. Weights are put on the squares of size $s$ (big squares). Consider the natural Markov chain with only central flips. 

\subsection*{MC$_{(1,s)-square}$ with $\lambda > 0$:}
Start from $X_0$. Let $X_t$ be the configuration at time $t$. At time $t$:
\begin{itemize}
\item Choose a site of the region u.a.r.,
\item If a central flip can be made -- put $s^2$ $1 \times 1$ squares with probability $\frac{1}{\lambda +1} $ or an $ s \times s$ square with probability $\frac{\lambda}{\lambda +1} $, thus defining $X_{t+1}$,
\item Otherwise do nothing and set $X_{t+1} = X_{t}$.
\end{itemize}

For two tilings $A$ and $B$ of the region $R$ by $1 \times 1$ and $s \times s $ flips let $\varphi(A,B)$ denote the minimal number of flips one has to perform to get from $A$ to $B$. It follows directly from the definition of a flip that $\varphi(A,B) = \varphi(B,A)$ for any $A,B$.

It turns out that with $\lambda$ sufficiently small, the coupling time for MC$_{(1,s)-square}$ is polynomial. Namely, we get the following result.

\begin{theorem}
\label{thm: $(1,s)$ weighted coupling}
Consider $(1,s)-$square tilings of an $n \times n $ region. 
MC$_{(1,s)-square}$ is rapidly mixing for $\lambda \leq \frac{1}{(2s-1)^2-2}$ and there exists a positive constant $c$ such that
$$
\tau_{mix}(\varepsilon) \leq c n^4 \log (n\varepsilon^{-1}).
$$
\end{theorem}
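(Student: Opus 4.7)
The plan is to apply the Dyer--Greenhill path coupling theorem (Theorem~\ref{theorem: DG}) with $\varphi(A,B)$ set to the flip distance introduced just before the statement; the diameter $D$ is bounded by the number of central blocks in the region, so $D=O(n^2)$. The subset $U$ of ``adjacent'' pairs will be the distance-one pairs $(X,Y)$ with $\varphi(X,Y)=1$: such a pair differs by a single central flip inside the $s\times s$ block whose upper-right corner is some site $v_0$, and up to swapping the roles I may assume that $X$ contains one big square in that block while $Y$ contains $s^2$ unit squares. I couple the two chains via the identity coupling: at each step both copies sample the same uniform site $v$ (out of $N=n^2$) and the same coin $c$, equal to \emph{big} with probability $\lambda/(\lambda+1)$ and \emph{small} with probability $1/(\lambda+1)$, and each copy performs the prescribed central flip at $v$ if and only if its block at $v$ is currently monochromatic.

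The sites $v$ split into three classes. First, the \emph{good} site $v=v_0$: the block at $v_0$ is monochromatic in both $X$ and $Y$, so the move is admissible in both copies and the common coin $c$ forces identical outcomes, whence $\varphi$ drops deterministically from $1$ to $0$ and contributes $-1/N$ to the drift. Second, a \emph{neutral} site is one whose $s\times s$ block is disjoint from the block at $v_0$; there $X$ and $Y$ agree, the same move occurs in both copies, and $\varphi$ is unchanged. Third, the \emph{potentially bad} sites are the $(2s-1)^2-1$ sites with $v\neq v_0$ and $\lVert v-v_0\rVert_\infty\le s-1$. At such a $v$, the block in $X$ intersects but does not equal the big square at $v_0$, so it is ``mixed'' and admits no flip in $X$; in $Y$ the block at $v$ admits a flip only when the portion outside $v_0$'s block is itself unit-tiled, and even then only the coin outcome $c=\text{big}$ actually creates a new disagreement. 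Hence each such site contributes a bad-event probability of at most $\lambda/((\lambda+1)N)$, and when the bad event occurs the new pair is linked by two central flips (undo the big square at $v_0$, then add the big square at $v$), so $\varphi$ increases by at most $1$.

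Summing the three contributions,
\begin{equation*}
\mathbb{E}[\varphi(X_{t+1},Y_{t+1})]-\varphi(X_t,Y_t)\;\le\;\frac{1}{N}\left(-1+((2s-1)^2-1)\frac{\lambda}{\lambda+1}\right)\;=\;\frac{\lambda((2s-1)^2-2)-1}{(\lambda+1)N},
\end{equation*}
which is $\le 0$ precisely when $\lambda\le 1/((2s-1)^2-2)$. For $\lambda$ strictly below this threshold one has $\beta<1$ with $1-\beta=\Theta(1/N)$, so part~1 of Theorem~\ref{theorem: DG} together with $D=O(n^2)$ already yields a polynomial mixing bound; at the critical value $\lambda=1/((2s-1)^2-2)$ we instead have $\beta=1$, but the good-site contribution gives $\alpha\ge 1/N>0$ and part~2 applies. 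Either way the announced $O(n^4\log(n\varepsilon^{-1}))$ bound follows. The main obstacle will be the case analysis at the bad sites: I must check carefully that the count is exactly $(2s-1)^2-1$, that at each such site only one of the two coin values can produce a new disagreement, and, most importantly, that the resulting flip distance increases by at most one via the explicit ``undo at $v_0$, redo at $v$'' two-flip sequence. All three facts rest on the monochromaticity of the block at $v_0$ in each copy, which forces overlapping non-$v_0$ blocks to be mixed on the side hosting the big square, but the enumeration is where the bulk of the work lies.
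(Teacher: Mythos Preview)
Your proposal is correct and follows essentially the same path-coupling argument as the paper: identity coupling, the good site $v_0$ contributing $-1/N$ to the drift, at most $(2s-1)^2-1$ bad overlapping sites each contributing $\lambda/((\lambda+1)N)$, and the variance lower bound $\alpha\ge 1/N$ for the $\beta=1$ case. You are in fact more explicit than the paper about why the distance increases by at most one at a bad site (via the two-flip ``undo at $v_0$, redo at $v$'' sequence) and about the three-way classification of sites; the paper only sketches the count for $s=2$ and then asserts the general inequality.
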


\begin{proof}
Consider a coupling  $(A_t,B_t)_t$ and two configurations  $A_t$ and $B_t$ at time $t$ that are different by one flip, thus $\varphi(A_t,B_t) =1$. We want to apply the coupling theorem by Dyer and Greenhill  and prove that $\mathbb{E} \Delta \varphi \leq 0$. 

Consider $s = 2$. In the worst case scenario, there are 8 bad sites that increase the distance between the two configurations and only one that decreases. A bad flip always implies putting a big square in the configuration with small squares. It is done with probability $\frac{\lambda}{N(\lambda  + 1)}$, where $N  = n^2$ is the area of the region. The one good site decreases the distance for both direction of a flip: this is done with  probability $\frac{\lambda}{N(\lambda  + 1)} + \frac{1}{N(\lambda  + 1)}$. So
 
\begin{equation}\label{eq: thm $(1,s)$ weighted coupling 1 }
\mathbb{E}(\Delta \varphi) \leq -\frac{1}{N} +  \frac{8 \lambda}{N(\lambda  + 1)}.
\end{equation} 

This means that when the right part of \eqref{eq: thm $(1,s)$ weighted coupling 1 } is  not greater than $0$, the chain is rapidly mixing. By solving the inequality one gets the condition on $\lambda$: $\lambda \leq \frac{1}{7}$. 

When $s > 2$, the number of bad sites does not exceed $(2(s-1)+1)^2 - 1$, so:

\begin{equation}\label{eq: thm $(1,s)$ weighted coupling 2 }
\mathbb{E}(\Delta \varphi) \leq -\frac{1}{N} +  \frac{((2s-1)^2) -1 ) \lambda}{N(\lambda  + 1)}.
\end{equation} 

$ \mathbb{E}(\Delta \varphi) \leq 0 $ whenever $ \frac{((2s-1)^2 -1 ) \lambda}{(\lambda  + 1)}\leq 1. $
This is true when 
$$
\lambda \leq  \frac{1}{(2s-1)^2) - 2}.
$$

In order to apply the coupling theorem, we also need there to exist $\alpha > 0$ such that $\mathbb{P}(\varphi(A_{t+1},B_{t+1}) \neq  \varphi(A_t,  B_t)) \geq \alpha $. The inequality holds  for $\alpha = \frac{1}{N}$. 

Now we can freely apply the theorem and get the following bound on the mixing time of an $n \times n $ region:

$$
\tau_{mix} \leq  c D^2 n^2,
$$
where $c$ is some positive constant,  $D$ is the diameter of the tiling graph. Since $D$ is $O(n^2)$, one gets the desired bound on  $\tau_{mix}$.

\end{proof}

\textbf{Remark 1.} Simulations with weight parameter $\lambda$ from Theorem \ref{thm: $(1,s)$ weighted coupling} suggest coupling in $O(n^2)$ steps.

\textbf{Remark 2.} We considered only central flips in the above theorem. Mixing time will stay polynomial if one considers horizontal/vertical/dragging flips as well. But it makes the calculations more cumbersome.

\section{Simulations}
\label{subsec: simulations}

Let us describe the algorithm used for simulations. We use Python language to run the simulations and Sage graphics for the pictures.\\

The algorithm \textbf{1} describes a basic coupling approach for getting a sample of a $(m,s)-$square tiling of a given regions. Since the chain is not monotone, we cannot use Coupling From the Past \cite{LPW09}. The main problem though is which configurations to choose as initial configurations of the coupling. The initial configurations are chosen to be a pair of tilings which are as far as possible from each other in the tiling graph. We choose a pair of a minimal and maximal tilings, knowing that they need the maximal number of flips to be performed in order to get from one to the other. The number of steps of the algorithm after which they meet gives an estimate on the coupling time and an idea on the general look of the tiling.\\

\begin{algorithm}
\KwData{$n$, $m$ and $s$}
\KwResult{Random tiling of a $n\times n$-square by $m\times m$ and $s\times s$ square tiles}
$A\gets$ max tiling\;
$B\gets$ min tiling\;
$t\gets 0$\;
\While{$A\neq B$}
{
direction$\gets$ random choice up or down\;
position $\gets$ random choice of vertex\;
$A\gets $flip($A$, direction, position)\;
$B\gets $flip($B$, direction, position)\;
$t\gets t+1$\;
}
\label{algo:1}
\caption{Uniform sampling by coupling}
\end{algorithm}

 Table \ref{tab:avg} shows estimates on the average coupling time over 100 trials for $(1,s)-$tilings of a $n \times n $ region for $s=2, \ldots,10$ and $n = 10,\ldots, 30$. See Figure \ref{fig: squares_coupling} for examples of $(4,7)$ and $(3,10)-$tilings obtained by coupling.\\

\begin{table}

\resizebox{0.7\textwidth}{!}{\begin{minipage}{\textwidth}

\begin{center}
 \begin{tabular}{|r|c|c|c|c|c|c|c|c|c|}
\hline 
$n \setminus s$ & 2 & 3 & 4 & 5 & 6 & 7 & 8 & 9 & 10 \\
\hline
10 & $1.2 \times 10^3$ & $2.7 \times 10^3$&   $1.01 \times 10^3$ &$4 \times 10^2$ & $ 2\times 10^2$  & $ 1.6 \times 10^2$ & $3 \times 10^1 $& $1 \times 10^1 $ & $1$    \\
11 & $2.9 \times 11^3$ & $5.02 \times 11^3$  &  $2.  \times 11^3$ &$1.01 \times 11^3$ & $ 1.3\times 11^3$  & $ 1.3 \times 11^2$ & $2 \times 10^2 $ &$1.3 \times 10^{1.5} $ & $0.8 \times 10^1 $ \\
12 & $2.2 \times 12^{3.5}$ & $0.8 \times 12^4$&  $3.1 \times 12^3$ &$2.1 \times 12^3$ & $ 1.3 \times 12^3$  & $ 0.9 \times 12^3$ & $3 \times 10^2 $ & $2.5 \times 10^2 $&  $2.1 \times 10^{1.5} $\\
13 & $2.3 \times 13^{3.5}$  & $1.1 \times 13^4$&  $2.6 \times 13^{3.5}$ &$3.1 \times 13^3$ & $ 1.1\times 13^3$  & $ 1.24 \times 13^3$ & $1.169 \times 13^{2.5} $ &   $1.7 \times 13^{2.5} $& $1.01 \times 13^2 $\\
14 & $1.5 \times 14^{3.5}$  & $1.2 \times 14^4$&  $0.9 \times 14^4$ &$5.1 \times 14^3$ & $ 2.3 \times 14^3$  & $ 1.01 \times 14^2$ & $0.9 \times 14^3 $& $0.83 \times 14^3 $&  $1.73 \times 14^2 $\\
15 & $1.1 \times 15^{3.5}$  & $1.9 \times 15^4$& $1.2 \times 15^4$  &$8.2 \times 15^3$ & $ 4.1 \times 15^3$  & $ 1.1 \times 15^3$ & $1.7 \times 15^3$&  $1.3 \times 15^3$& $1.9 \times 15^{2.5}$\\
16 & $1.1 \times 16^{3.5}$  & $2.5 \times 16^4$& $1.9 \times 16^4$  &$0.9 \times 16^4$ & $ 1.01 \times 16^{3.5}$  & $ 0.8 \times 16^3$ & $1.95 \times 16^3$&$2.2 \times 16^3$ & $3.9 \times 16^{2.5}$\\
17 & $2.5 \times 17^{3.5}$  & $3.01 \times 17^4$& $2.9 \times 17^4$  &$0.9 \times 17^4$ & $ 2.5 \times 17^{3.5}$  & $ 3.1 \times 17^3$ & $ 2.6 \times 17^3$& $ 2.82 \times 17^3$& $ 0.95 \times 17^3$\\
18 & $2.4 \times 18^{3.5}$  & $4.9 \times 18^4$&  $1.01 \times 18^{4.5}$ &$1.3 \times 18^4$ & $ 0.8 \times 18^{4}$  & $ 1.9 \times 18^3$ & $ 3.5 \times 18^3$& $ 3.3 \times 18^3$ & $ 1.7 \times 18^3$\\
19 & $2.9 \times 19^{3.5}$  & $5.1 \times 19^4$&   $1.4 \times 19^{4.5}$ &$2.01 \times 19^4$ & $ 0.85 \times 19^{4}$  & $ 2.5 \times 19^3$ & $ 1.1 \times 19^{3.5}$& $ 0.89 \times 19^{3.5}$& $ 2.2 \times 19^3$\\
20 & $2.9 \times 20^{3.5}$  & $1.01 \times 20^4$&  $2.9 \times 20^{4.5}$ &$3.7 \times 20^4$ & $ 1.1 \times 20^{4}$  & $ 1.2 \times 20^3$ & $1.18 \times 20^{3.5}$& $1.32 \times 20^{3.5}$& $2.5 \times 20^{3}$\\
21 & $2.9 \times 21^{3.5}$  & $1.4 \times 21^4$& $1.8 \times 21^{4.5}$  &$2.6 \times 21^{4.5}$ & $ 1.6 \times 21^4$  & $ 1.01 \times 21^{3.5}$ & $1.8 \times 21^{3.5}$& $1.7 \times 21^{3.5}$& $0.7 \times 21^{3.5}$\\
22 & $2.3 \times 22^{3.5}$  & $1.8 \times 22^4$&  $1.2 \times 22^{4.5}$ &$2.2 \times 22^{4.5}$ & $3.01 \times 22^4$  & $ 3.2 \times 22^4$ & $3.2 \times 22^{3.5}$& $2.1\times 22^{3.5}$& $0.8 \times 22^{3.5}$\\
23 & $2.9 \times 23^{3.5}$  & $1.2 \times 23^5$& $0.7 \times 23^5$  &$0.8 \times 23^5$ & $ 1.01 \times 23^{4.5}$  & $ 1.02 \times 23^3$ & $1.23 \times 23^{4}$& $2.9 \times 23^{3.5}$& $1.7 \times 23^{3.5}$\\
24 & $2.2 \times 24^{3.5}$  & $0.7 \times 24^5$& $1.8 \times 24^{4.5}$  &$0.6 \times 24^5$ & $ 0.7 \times 24^5$  & $ 1.7 \times 24^{3.5}$ &$1.4 \times 24^4$ & $1.05 \times 24^4$& $2.2 \times 24^{3.5}$\\
25 & $2.9 \times 25^{3.5}$  & $1.4 \times 25^4$&  $2.5 \times 25^{4.5}$ &$0.7 \times 25^5$ & $ 1.01 \times 25^5$  & $ 2.5 \times 25^4$ &  $ 1.8 \times 25^4$& $ 1.79 \times 25^4$& $ 0.77 \times 25^4$\\
26 & $2.3 \times 26^{3.5}$  & $2.7 \times 26^5$&   $1.1 \times 26^5$ &$0.7 \times 26^5$ & $ 1.1 \times 26^5$  & $ 1.8 \times 26^4$ & $ 2.6 \times 26^4$& $ 3.2 \times 26^4$& $ 1.28 \times 26^4$\\
27 & $2.9 \times 27^{3.5}$  & $2.5 \times 27^5$&   $1.5 \times 27^5$&$3.1 \times 27^5$ & $ 1.2 \times 27^{4.5}$  & $ 1.5 \times 27^{4.5}$ & $ 1.02 \times 27^{4.5}$& $ 2.8 \times 27^{4}$& $ 1.2 \times 27^{4}$\\

28 & $2.9 \times 28^{3.5}$  & $3.1 \times 28^5$&  $2.7 \times 28^5$ & $2.3 \times 28^5$ & $ 0.7 \times 28^5$  & $ 1.8 \times 28^{4.5}$ & $ 1.3 \times 28^{4.5}$& $ 1.2 \times 28^{4.5}$& $ 1.7 \times 28^{4}$\\

\hline 
\end{tabular}
\end{center}
 \end{minipage}}

\caption{Average coupling time for $(1,s)$ tilings of $n \times n $ squares over 10 to 100 trials.}
\label{tab:avg}
\end{table}

\begin{figure}
\begin{center}
\includegraphics[scale=0.4]{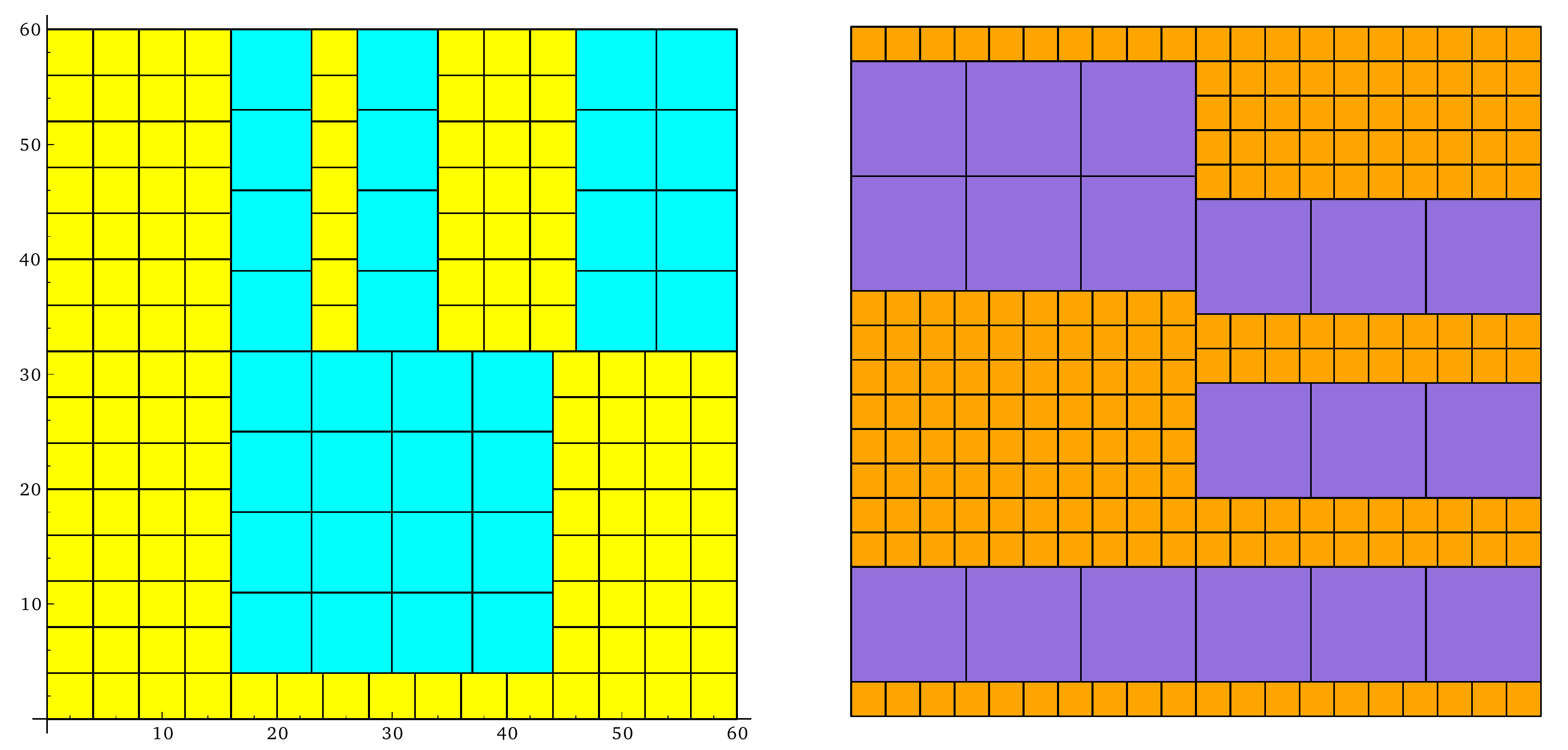}
\end{center}
\caption{ $(4,7)$ and $(3,10)-$square tilings of a  $60 \times  60$ square via coupling.}
\label{fig: squares_coupling}
\end{figure}

\begin{conjecture}
Let $R$ be an $n \times n$ square region of area $n$, tiled by squares of size $m$ and $s$, $m < s$. Then MC$_{(1,2)-square}$ is rapidly mixing and
$$
\tau_{mix}^{(1,2)} = O(n^{3.5})
$$
Moreover,  for $s > 2$ MC$_{(m,s)-square}$ is not rapidly mixing 
its mixing time $\tau_{mix}^{(m,s)} $ is sub-exponential and has the following bound:
$$
\tau_{mix}^{(m,s)}  = \Theta (exp([n/s] + 1)).
$$
\end{conjecture}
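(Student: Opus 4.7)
The plan is to treat the three assertions separately: the polynomial upper bound for the $(1,2)$-chain, the lower sub-exponential bound for $(m,s)$ with $s>2$, and the matching sub-exponential upper bound. The first is a purely positive result, while the $s>2$ statement requires both a Cheeger-type lower bound and a block-decomposition upper bound.

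For the polynomial upper bound on $\tau_{mix}^{(1,2)}$, note that the unweighted chain sits at $\lambda=1$, well above the path-coupling threshold $\lambda=1/3$ of Theorem \ref{th: weighted mixing}, so coupling is inapplicable. I would adapt the canonical-paths construction from the proof of Theorem \ref{thm: mixing time for nxlong }, now routing paths through a randomized $\log n$-thick horizontal strip rather than a fixed lexicographic sweep. The hope is that averaging over the choice of strip compensates for the extra $n/\log n$ vertical scale, giving an edge congestion on the order of $n^{3.5}$ rather than $n^4 \log n$ on the full $n \times n$ region. The half-integer exponent is suggestive of a fluctuation or entropy estimate on the moving interface between the corrected and uncorrected sides of the sweep, and establishing such an estimate would be the central combinatorial lemma.

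For the exponential lower bound when $s>2$, I would apply Cheeger's inequality, $\tau_{mix} \geq \Omega(1/\Phi)$, and exhibit a bottleneck of weight $\exp(-\lfloor n/s \rfloor)$. The simulations in Section \ref{subsec: simulations} and the conjectured arctic-circle phenomenon point to typical tilings containing $\Theta(n/s)$ roughly independent frozen strips parallel to the boundary. Define $S \subset \Omega$ by a parity statistic counting $s\times s$ squares in alternate strips; any single flip changes this statistic within only one strip, so the boundary $\partial S$ consists of configurations in which exactly one strip sits in an intermediate ``critical'' state. Counting these configurations should give $\Phi = O(\exp(-\lfloor n/s \rfloor))$, hence the lower bound. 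For the matching upper bound I would pursue a block-dynamics decomposition: split the region into $\lfloor n/s \rfloor$ horizontal strips of height $\Theta(s)$, prove that conditional on the tiling outside a strip the induced chain mixes in polynomial time (a $(1,s)$ analogue of Theorem \ref{thm: mixing time for nxlong }), and combine the strips via a Nash-Williams or spectral comparison that costs at most a factor of $e$ per strip.

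The main obstacle is unquestionably the $(1,2)$ polynomial bound. Proving polynomial mixing at the critical point of a hard-core-type model is a notoriously open direction, and the precise exponent $3.5$ is almost certainly out of reach of current techniques; absent a genuinely new idea, a weaker polynomial bound or even a quasi-polynomial bound would already be a substantial contribution. A secondary difficulty for the upper bound on $\tau_{mix}^{(m,s)}$ is that decoupling between strips fails exactly along the frozen-region interfaces that were used to produce the bottleneck, so the block dynamics must be implemented with care to avoid losing another exponential factor in the comparison step.
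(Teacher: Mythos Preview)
The statement you are attempting to prove is a \emph{Conjecture} in the paper, not a theorem; the paper offers no proof whatsoever. Its only support is the table of average coupling times in Section~\ref{subsec: simulations} (Table~\ref{tab:avg}), obtained by running Algorithm~1 on small regions. There is therefore nothing in the paper to compare your argument against.

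Given that, your proposal should be read not as a proof but as a research programme, and on that reading it is reasonable but speculative in exactly the places you flag. The adaptation of canonical paths to the full $n\times n$ region is the natural thing to try, but your own remark that the $(1,2)$ chain at $\lambda=1$ sits at the (conjectural) critical point of a hard-core model is precisely why the paper leaves this open: the congestion bound you hope for would require controlling the number of tilings of an $n\times O(1)$ strip compatible with two given boundary conditions, and there is no reason to expect the clean cancellation that produces the exponent $3.5$. For the $s>2$ lower bound, your bottleneck set $S$ is defined only heuristically; to get $\Phi = O(\exp(-\lfloor n/s\rfloor))$ you would need to show that the ``critical'' strip configurations are exponentially rare in $\lfloor n/s\rfloor$, which is itself a statement about the structure of typical tilings that the paper only observes empirically. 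In short, your outline identifies the right tools (canonical paths, conductance, block dynamics) and the right obstacles, but none of the three pieces is close to a proof, and the paper does not claim otherwise.
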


\section{Limit shape}
\label{subsec: limit shape}

Let us consider a region such that the height function is not a constant on its boundary but rather grows linearly. One can think of an ``Aztec diamond-type'' region. It seems that in this case, tiling by squares of sizes $m$ and $s$, if both $m,s > 1$ have a typical limiting look, similar to the Arctic circle for dimer tilings (see \cite{CLP02,JPS98}). An example is shown in Figure \ref{fig: 23 arctic}. We consider a hexagonal region with with a staircase border in the bottom and top part and flat in the middle. Each stair is of horizontal size $s$ and vertical $m$, such that the height function becomes linear over the length of the side.\\

When $m = 1$, $s > 1$ there is no long-range property, since $1  \times 1$ squares can fit everywhere, so the shape of the region does not force any specific placement of tiles, see Figure \ref{fig: 12 arctic}.\\

\begin{figure}
 \hspace*{-1cm}
\includegraphics[scale=0.55]{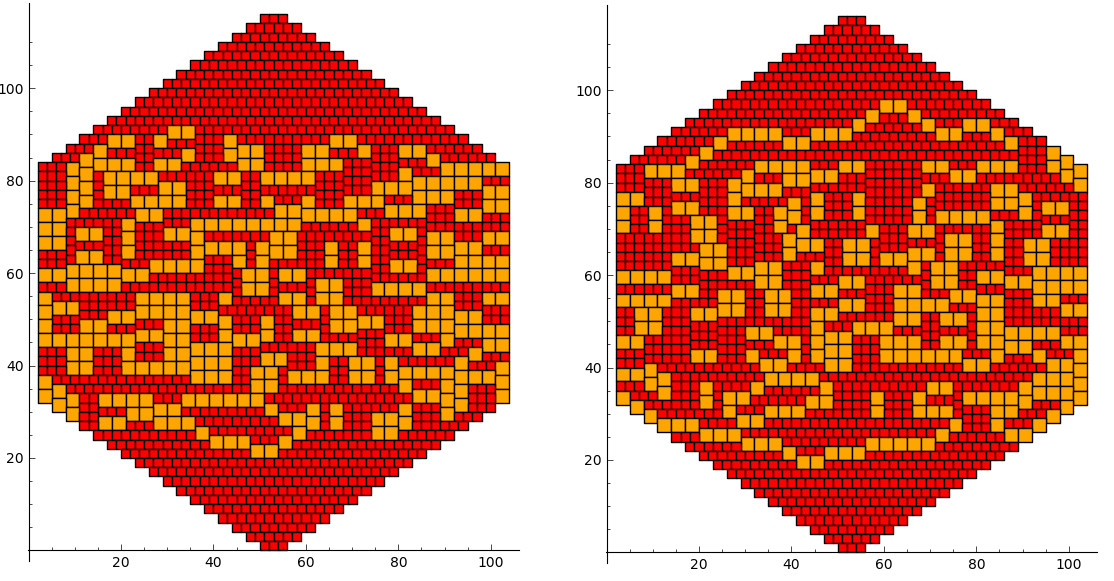}
\caption{$(2,3)-$tiling of the 50-diamond after 10 million flips (left) and 50 million flips (right).}
\label{fig: 23 arctic}
\end{figure}

\begin{figure}
\centering
\includegraphics[scale=0.55]{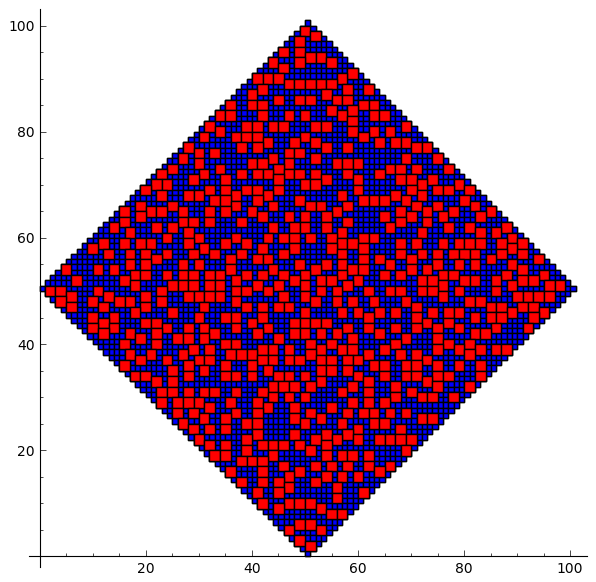}
\caption{$(1,2)-$tiling of a diamond-shaped region.}
\label{fig: 12 arctic}
\end{figure}

\section*{Acknowledgements}

The author would to thank Thomas Fernique for valuable remarks and for reading the final draft of the paper, Pavel Kalouguine, Benoît Laslier and Eric Rémila for their helpful comments.


\end{document}